\newcommand{\V}{\mathcal{V}}
\newcommand{\OPT}{\mathcal{OPT}}
\let\cite\citet
\newtheorem{theorem}{Theorem}[section]
\newtheorem{lemma}[theorem]{Lemma}
\newtheorem{corollary}[theorem]{Corollary}
\newtheorem{fact}[theorem]{Fact}
\newtheorem{definition}[theorem]{Definition}
\begin{document}

\title{$k$-center Clustering under Perturbation Resilience
\footnote{
Authors' addresses: 
\texttt{ninamf@cs.cmu.edu, nika.haghtalab@microsoft.com, crwhite@cs.cmu.edu}. 
This paper combines and extends results appearing in ICALP 2016 \citep{symmetric}
and arXiv:1705.07157 \citep{bridging}.
}
}

\author{Maria-Florina Balcan \and Nika Haghtalab \and Colin White}

\date{}

\maketitle

\begin{abstract}

The $k$-center problem is a canonical and long-studied facility
location and clustering problem with many applications in both its
symmetric and asymmetric forms.  Both versions of the problem have
tight approximation factors on worst case instances: a
$2$-approximation for symmetric $k$-center and an
$O(\log^*(k))$-approximation for the asymmetric version. Therefore to
improve on these ratios, one must go beyond the worst case.  

In this work, we take this approach and provide strong positive results both for
the asymmetric and symmetric $k$-center problems under a natural
input stability (promise) condition called {\em $\alpha$-perturbation
resilience} \citep{bilu2012stable}, which states that the
optimal solution does not change under any $\alpha$-factor
perturbation to the input distances.
We provide algorithms that give strong guarantees simultaneously for stable and non-stable instances:
our algorithms always inherit the worst-case guarantees of clustering approximation algorithms,
and output the optimal solution if the input is 2-perturbation resilient.
In particular, we show that if the input is only perturbation resilient on part of the data, our algorithm will
return the optimal clusters from the region of the data that is perturbation resilient,
while achieving the best worst-case approximation guarantee on the remainder of the data.
Furthermore, we prove our result is tight by showing symmetric $k$-center under
$(2-\epsilon)$-perturbation resilience is hard unless $NP=RP$.

The impact of our results are multifaceted. First, 
to our knowledge, asymmetric $k$-center is the first problem that is hard to approximate to
any constant factor in the worst case, yet can be optimally solved in
polynomial time under perturbation resilience for a constant value of
$\alpha$.
This is also the first tight result for any problem under perturbation resilience,
i.e., this is the first time the exact value of $\alpha$ for which the problem
switches from being NP-hard to efficiently computable has been found.
Furthermore, our results illustrate a surprising relationship between symmetric and
asymmetric $k$-center instances under perturbation resilience. 
Unlike approximation ratio, for which symmetric $k$-center
is easily solved to a factor of $2$ but asymmetric $k$-center cannot
be approximated to any constant factor, both symmetric and asymmetric
$k$-center can be solved optimally under resilience to 2-perturbations.
Finally, our guarantees in the setting where only part of the data satisfies perturbation resilience
makes these algorithms more applicable to real-life instances.

\end{abstract}

\newpage

\tableofcontents
\newpage

\section{Introduction} 

Clustering is a fundamental problem in combinatorial optimization with 
a wide range of applications including bioinformatics, computer vision, text analysis, and countless others.
The underlying goal is to partition a given set of points to maximize similarity within a partition and
minimize similarity across different partitions.
A common approach to clustering is to consider an objective function over all possible partitionings 
and seek solutions that are optimal according to the objective.
Given a set of points (and a distance metric), common clustering objectives include finding
$k$ centers to minimize the sum of the distance from each point to its closest center ($k$-median),
or to minimize the maximum distance from a point to its closest center ($k$-center).

Traditionally, the theory of clustering (and more generally, the theory of algorithms) 
has focused on the analysis of worst-case instances 
\citep{arya2004local,byrka2015improved,charikar1999constant,outliers,chen2008constant,gonzalez1985clustering,makarychev2016bi}.
For example, it is well known the popular objective functions are 
provably NP-hard to optimize exactly or even approximately (APX-hard)~\citep{gonzalez1985clustering,jain2002new,lee2017improved},
so research has focused on finding approximation algorithms. 
While this perspective has led to many elegant approximation algorithms and lower bounds for worst-case instances, 
it is often overly pessimistic of an algorithm's performance on ``typical'' instances or real world instances.
A rapidly developing line of work in the algorithms community, 
the so-called \emph{beyond worst-case analysis} of  algorithms (BWCA),
considers the design and analysis of problem instances under natural structural properties which may be satisfied in real-world applications.
For example, the popular notion of $\alpha$-perturbation resilience, introduced by 
\cite{bilu2012stable},
considers instances such that the optimal solution does not change when
the input distances are allowed to increase by up to a factor of $\alpha$.
The goals of BWCA are twofold: 
\emph{(1)} to design new algorithms with strong performance guarantees under the added assumptions~\citep{as,hardt2013beyond,kumar2004simple,tim}, and
\emph{(2)} to prove strong guarantees under BWCA assumptions for existing algorithms used in practice~\citep{makarychev2014bilu,ostrovsky2012effectiveness,spielman2004smoothed}.
An example of goal \emph{(1)} is a series of work focused on finding exact algorithms for $k$-median, $k$-means, and $k$-center
clustering under $\alpha$-perturbation resilience
\citep{awasthi2012center,balcan2012clustering,angelidakis2017algorithms}. 
The goal in this line of work is to find the minimum value of $\alpha\geq 1$ 
for which optimal clusterings of $\alpha$-perturbation resilient instances can be found efficiently.
Two examples of goal \emph{(2)} are as follows.
Ostrovsky et al.\ showed that $k$-means++ outputs a near-optimal clustering, as long as the data satisfies a natural clusterability criterion \citep{ostrovsky2012effectiveness},
and \citet{spielman2004smoothed} established the expected runtime of the simplex method is $O(n)$ under smoothed analysis.


In approaches for answering goals \emph{(1)} and \emph{(2)},
researchers have developed an array of sophisticated tools exploiting the structural properties 
of such instances leading to algorithms which can output the optimal solution.
However, overly exploiting a BWCA assumption can lead to algorithms
that perform poorly when the input data does not exactly satisfy the given assumption.
Indeed, recent analyses and technical tools are susceptible to small deviations from the BWCA assumptions 
which can propagate when just a small fraction of the data does not satisfy the assumption.
For example, some recent algorithms make use of a dynamic programming subroutine which crucially need the entire instance
to satisfy the specific structure guaranteed by the BWCA assumption.
To continue the efforts of BWCA in bridging the theory-practice gap, it is essential to study algorithms whose guarantees degrade
gracefully to address scenarios that present mild deviations from the standard BWCA assumptions.
Another downside of existing approaches is that BWCA assumptions are often not efficiently verifiable.
This creates a catch-22 scenario: it is only useful to run the algorithms if the data satisfies certain assumptions,
but a user cannot check these assumptions efficiently.
For example, by nature of $\alpha$-perturbation resilience
(that the optimal clustering does not change under \emph{all} $\alpha$-perturbations
of the input), it is not known how to test this condition without computing the optimal clustering over $\Omega\left(2^n\right)$
different perturbations.
To alleviate these issues, in this work we also focus on what we propose should be a third goal for BWCA:
to show (new or existing) algorithms 
whose performance degrades gracefully on instances that only partially meet the BWCA assumptions.

\subsection{Our results and techniques}
In this work, we address goals \emph{(1)}, \emph{(2)}, and \emph{(3)} of BWCA by providing robust algorithms which give the optimal solution
under perturbation resilience, and also perform well when the data is partially perturbation resilient, or not at all perturbation resilient. 
These algorithms act as an interpolation between worst-case and beyond worst-case analysis.
We focus on the symmetric/asymmetric $k$-center objective under perturbation resilience.
Our algorithms simultaneously output the optimal clusters from the stable regions of the data, while achieving
state-of-the-art approximation ratios over the rest of the data.
In most cases, our algorithms are natural modifications to existing approximation algorithms, thus achieving goal \emph{(2)} of BWCA.
To achieve these two-part guarantees, we define the notion of perturbation resilience on a subset of the datapoints.
All prior work has only studied perturbation resilience as it applies to the entire dataset.
Informally, a subset $S'\subseteq S$ satisfies $\alpha$-perturbation resilience if all points $v\in S'$ remain in
the same optimal cluster under any $\alpha$-perturbation to the input.
We show that our algorithms return all optimal clusters from these locally stable regions.
Most of our results also apply under the recently defined, weaker condition of 
$\alpha$-\emph{metric perturbation resilience}~\citep{angelidakis2017algorithms}, 
which states that the optimal solution cannot change under
the metric closure of any $\alpha$-perturbation.
We list all our results in Table \ref{tab:results},
and give a summary of the results and techniques below.

\paragraph{$k$-center under 2-perturbation resilience}
In Section~\ref{sec:2pr}, 
we show that \emph{any} 2-approximation algorithm for $k$-center will always return
the clusters satisfying $2$-perturbation resilience.
Therefore, since there are well-known 2-approximation algorithms for symmetric $k$-center,
our analysis shows these will output the optimal clustering under 2-perturbation resilience.
For asymmetric $k$-center, we give a new algorithm which outputs the optimal clustering under 2-perturbation resilience.
It works
by first computing the ``symmetrized set'', or the points which demonstrate a rough symmetry.
We show how to optimally cluster the symmetrized set, and then we show how to add back the highly asymmetric
points into their correct clusters.

\paragraph{Hardness of symmetric $k$-center under $(2-\delta)$-perturbation resilience.}
In Section \ref{sec:hardness}, we prove there is no polynomial time algorithm for symmetric $k$-center under $(2-\delta)$-perturbation resilience unless $NP=RP$, which
shows that our perturbation resilience results are tight for both symmetric and asymmetric $k$-center. 
In particular, it implies that we have identified the exact moment ($\alpha=2$) where the problem switches
from efficiently computable to NP-hard, for both symmetric and asymmetric $k$-center.
For this hardness result, we use
a reduction from a variant of perfect dominating set. To show that this variant is itself hard, we construct
a chain of parsimonious reductions (reductions which conserve the number of solutions) from
3-dimensional matching to perfect dominating set.

Our upper bound for asymmetric $k$-center under 2-PR and lower bound for symmetric $k$-center under $(2-\delta)$-PR 
illustrate a surprising relationship between symmetric and asymmetric 
$k$-center instances under perturbation resilience. 
Unlike approximation ratio, for which symmetric $k$-center is easily solved to a factor of 2
but asymmetric $k$-center cannot be approximated to any constant factor, 
both symmetric and asymmetric $k$-center can be solved optimally under resilience to 2-perturbations. 
Overall, this is the first tight result
quantifying the power of perturbation resilience for a canonical combinatorial optimization problem.

\paragraph{Local perturbation resilience}
In Section~\ref{sec:local}, we apply our results from Section \ref{sec:2pr} to
the local perturbation resilience setting.
For symmetric $k$-center, we show that any 2-approximation outputs all optimal clusters from 2-perturbation resilient regions.
For asymmetric $k$-center, we design a new algorithm based off of the worst-case $O(\log^* n)$ approximation algorithm due to \cite{vishwanathan},
which is tight~\citep{chuzhoy2005asymmetric}. 
We give new insights into this algorithm, which allow us to show a modification of the algorithm
which outputs all optimal clusters from 2-perturbation resilient regions, while keeping the worst-case $O(\log^* n)$ guarantee overall.
If the entire dataset satisfies 2-perturbation resilience, then our algorithm outputs the optimal clustering.
We combine the tools of Vishwanathan with the perturbation resilience assumption to prove this two-part guarantee.
Specifically, we use the notion of a \emph{center-capturing vertex (CCV)}, which is used in the first phase of the approximation algorithm to pull
out supersets of clusters.
We show that each optimal center from a 2-perturbation resilient subset is a CCV and satisfies a separation property;
we prove this by carefully constructing a 2-perturbation in which points from other clusters cannot be too close to the center without causing a contradiction.
The structure allows us to modify the approximation algorithm of \cite{vishwanathan} to ensure that
optimal clusters from perturbation resilient subsets are pulled out separately in the first phase.
All of our guarantees hold under the weaker notion of metric perturbation resilience.

\paragraph{Efficient algorithms for symmetric and asymmetric $k$-center under $(3,\epsilon)$-perturbation resilience.}
In Section~\ref{sec:3eps}, we consider $(\alpha,\epsilon)$-perturbation resilience, 
which states that at most $\epsilon n$ total points can swap into or out of each cluster under any $\alpha$-perturbation.
For symmetric $k$-center, we show that any 2-approximation algorithm will return
the optimal clusters from $(3,\epsilon)$-perturbation resilient regions, assuming a mild lower bound on optimal cluster sizes,
and for asymmetric $k$-center, we give an algorithm which outputs a clustering that
is $\epsilon$-close to the optimal clustering.
Our main structural tool is showing that if any single point $v$ is 
close to an optimal cluster other than its own,
then $k-1$ centers achieve the optimal radius under a carefully constructed 3-perturbation.
Any other point we add to the set of centers
must create a clustering that is $\epsilon$-close to the optimal clustering, 
and we show all of these sets cannot simultaneously
be consistent with one another, thus causing a contradiction. A key concept in our analysis is defining the
notion of a cluster-capturing center, which allows us to reason about which points can capture a cluster when its center is removed.

\renewcommand{\arraystretch}{1.5}
\begin{table}[h]
\begin{center}
\begin{tabular}{|p{5.5cm}||p{2.5cm}|p{.5cm}||p{1cm}|p{1cm}|p{2cm}|}
\hline
\textbf{Problem}  & \textbf{Guarantee} & \textbf{$\alpha$} & \textbf{Metric}   & \textbf{Local} & \textbf{Theorem} \\ \hline\hline
Symmetric $k$-center under $\alpha$-PR  & $\OPT$  & $2$ & Yes & Yes & Theorem \ref{thm:kcenter} \\\hline
Asymmetric $k$-center under $\alpha$-PR & $\OPT$ & $2$ & Yes & Yes & Theorem \ref{thm:asy_local} \\\hline
Symmetric $k$-center under $(\alpha,\epsilon)$-PR & $\OPT$ & $3$ & No & Yes & Theorem \ref{thm:3epsthm}      \\\hline
Asymmetric $k$-center under $(\alpha,\epsilon)$-PR & $\epsilon$-close & $3$ & No & No & Theorem \ref{thm:3eps_asy} \\\hline
\end{tabular}
\end{center}
\caption{Our results over all variants of $k$-center under perturbation resilience}\label{tab:results}
\end{table}

\subsection{Related work} \label{sec:related_work}

\paragraph{Clustering.}
There are three classic 2-approximations for $k$-center from the 1980's
\citep{gonzalez1985clustering,hochbaum1985best,dyer1985simple},
which are known to be tight \citep{hochbaum1985best}.
Asymmetric $k$-center proved to be a much harder problem.
The first nontrivial result was an $O(\log^* n)$ approximation algorithm \citep{vishwanathan},
and this was later improved to $O(\log^* k)$ \citep{archer2001two}.
This result was later proven to be asymptotically tight~\citep{chuzhoy2005asymmetric}.

The first constant-factor approximation algorithm for $k$-median was given by
\cite{charikar1999constant},
and the current best approximation ratio is 2.675 by \cite{byrka2015improved}.
\cite{jain2002new} proved $k$-median is NP-hard to approximate to a factor better than 1.73. 

\paragraph{Perturbation resilience.}
Perturbation resilience was introduced by \cite{bilu2012stable}, who showed algorithms that outputted
the optimal solution for max cut under $\Omega(\sqrt{n})$-perturbation resilience (this was later improved by \cite{makarychev2014bilu}).
The study of clustering under perturbation resilience was initiated by \cite{awasthi2012center},
who provided an optimal algorithm for center-based clustering objectives (which includes $k$-median, $k$-means, and $k$-center clustering,
as well as other objectives) under 3-perturbation resilience.
This result was improved by \cite{balcan2012clustering}, who showed an algorithm for center-based clustering under $(1+\sqrt{2})$-perturbation resilience. 
They also gave a near-optimal algorithm for $k$-median under $(2+\sqrt{3},\epsilon)$-perturbation resilience when 
the optimal clusters are not too small.

Recently,
\cite{angelidakis2017algorithms} gave algorithms for center-based clustering (including $k$-median, $k$-means, and $k$-center) under 2-perturbation resilience, 
and defined the more general notion of metric perturbation resilience, although their algorithm does not extend to the $(\alpha,\epsilon)$-perturbation resilience
or local perturbation resilience settings.
\cite{cohen2017one} showed that local search outputs the optimal $k$-median, $k$-means, and $k$-center solution when the data satisfies a stronger
variant of 3-perturbation resilience, in which both the optimal clustering and optimal centers are not allowed to change under any 3-perturbation.
Perturbation resilience has also been applied to other problems, such as min multiway cut, the traveling salesman problem, finding Nash 
equilibria, metric labeling, and facility location~\citep{makarychev2014bilu,mihalak2011complexity,balcan2017nash,lang2017alpha,manthey2018perturbation}.

\paragraph{Subsequent work.}
\cite{vijayaraghavan2017clustering} study $k$-means under additive perturbation resilience, in which the optimal solution cannot change under additive perturbations
to the input distances.
\cite{deshpande2018clustering} gave an algorithm for Euclidean $k$-means under perturbation resilience which runs in time linear in $n$ and the dimension $d$,
and exponentially in $k$ and $\frac{1}{\alpha-1}$.
\cite{chekuri2018perturbation} showed the natural LP relaxation of $k$-center and asymmetric $k$-center is integral for 2-perturbation resilient instances.
They also define a new model of perturbation resilience for clustering with outliers, and they show the algorithm of \cite{angelidakis2017algorithms} exactly solves
clustering with outliers under 2-perturbation resilience, and they further show the natural LP relaxation for $k$-center with outliers is integral for
2-perturbation resilient instances.
Their algorithms have the desirable property that either they output the optimal solution, or they guarantee the input did not satisfy 2-perturbation resilience
(but note this is not the same thing as determining whether or not a given instance satisfies perturbation resilience).

\paragraph{Other stability notions.}
A related notion, approximation stability \citep{as}, states that any $(1+\alpha)$-approximation to the objective
must be $\epsilon$-close to the target clustering. There are several positive results for
$k$-means, $k$-median \citep{as,balcan2009agnostic,gupta2014decompositions}, and min-sum \citep{as,balcan2009finding,voevodski2011min}
under approximation stability.
\cite{ostrovsky2012effectiveness} show how to efficiently cluster instances
in which the $k$-means clustering cost is much lower than the $(k-1)$-means cost.
\cite{kumar2010clustering} give an efficient clustering algorithm for instances in which
the projection of any point onto the line between its cluster center to any other cluster center
is a large additive factor closer to its own center than the other center.
This result was later improved along multiple axes by \cite{Awasthi2012Improved}.
There are many other works that show positive results for different natural notions of stability in various 
settings~\citep{arora2012,awasthi2010stability,gupta2014decompositions,hardt2013beyond,kumar2010clustering,kumar2004simple,tim}.

\section{Preliminaries and basic properties}\label{sec:prelim}


A clustering instance $(S,d)$ consists of a set $S$ of $n$ points, a distance function
$d:S\times S\rightarrow\mathbb{R}_{\geq 0}$, and an integer $k$.
For a point $u\in S$ and a set $A\subseteq S$, we define $d(A,u)=\min_{v\in A}d(v,u)$.
The $k$-center objective is to find a set of points
$X= \{x_1, \dots, x_k\}\subseteq S$ called \emph{centers}
to minimize $\max_{v\in S}d(X,v)$.  
We denote $\text{Vor}_{X,d}(x)=\{v\in S\mid x=\text{argmin}_{y\in X}d(y,v)\}$, 
the Voronoi tile of $x\in X$ induced by $X$ on the set of points $S$ in metric $d$,
and we denote $\text{Vor}_{X,d}(X')=\bigcup_{x\in X'}\text{Vor}_X(x)$ 
for a subset $X'\subseteq X$. We often write $\text{Vor}_{X}(x)$ and $\text{Vor}_{X}(X')$ when $d$ is clear from context.
We refer to the Voronoi partition induced by $X$ as a clustering.
Throughout the paper, we denote the clustering with minimum cost by $\OPT=\{C_1, \dots, C_k \}$, 
we denote the radius of $\OPT$ by $r^*$,
and we denote the optimal centers by $c_1,\dots,c_k$, where $c_i$ is the center of $C_i$ for all $1\leq i\leq k$.
We use $B_r(c)$ to denote a ball of radius $r$ centered at point $r$.

Some of our results assume distance functions which are metrics, and some of our results assume \emph{asymmetric} distance functions.
A distance function $d$ is a \emph{metric} if 
\begin{enumerate}
\item for all $u,v$, $d(u,v)\geq 0$,
\item for all $u,v$, $d(u,v)=0$ if and only if $u=v$, 
\item for all $u,v,w$, $d(u,w)\leq d(u,v)+d(v,w)$, and 
\item for all $u,v$, $d(u,v)=d(v,u)$.
\end{enumerate}
An \emph{asymmetric} distance function satisifies \emph{(1)}, \emph{(2)}, and \emph{(3)}, but not \emph{(4)}.

Now we formally define \emph{perturbation resilience}, a notion introduced by \cite{bilu2012stable}.
$d'$ is called an $\alpha$-perturbation of the distance function $d$, if for all $u,v\in S$, $d(u,v)\leq d'(u,v) \leq \alpha d(u,v)$.
\footnote{
We only consider perturbations in which the distances increase because without loss of generality we can scale
the distances to simulate decreasing distances.
}

\begin{definition}(Perturbation resilience) \label{def:pr}
A clustering instance $(S,d)$ satisfies \emph{$\alpha$-perturbation resilience} ($\alpha$-PR) 
if for any $\alpha$-perturbation $d'$ of $d$, the optimal clustering
$\mathcal{C'}$ under $d'$ is unique and equal to $\OPT$.
\end{definition}

Note that the optimal \emph{centers} might change under an $\alpha$-perturbation,
but the optimal \emph{clustering} must stay the same.
We also consider a relaxed variant of $\alpha$-perturbation resilience, called $(\alpha,\epsilon)$-perturbation resilience,
that allows a small change in the optimal clustering when distances are perturbed. We say that two 
clusterings $\mathcal{C}$ and $\mathcal{C}'$ are $\epsilon$-close if
$\min_\sigma\sum_{i=1}^k \left|C_i\setminus C_{\sigma(i)}'\right|\leq\epsilon n$, where $\sigma$ is a permutation on $[k]$.

\begin{definition}($(\alpha,\epsilon)$-perturbation resilience) \label{def:alpha-epsilon}
A clustering instance $(S,d)$ satisfies $(\alpha,\epsilon)$-perturbation resilience if for any $\alpha$-perturbation
$d'$ of $d$, any optimal clustering $\mathcal{C}'$ under $d'$ is $\epsilon$-close to $\OPT$.
\end{definition}

In Definitions \ref{def:pr} and \ref{def:alpha-epsilon}, we do not assume that the $\alpha$-perturbations satisfy the triangle inequality.
\cite{angelidakis2017algorithms} recently studied the weaker definition in which the $\alpha$-perturbations 
must satisfy the triangle inequality, called \emph{metric perturbation resilience}.
We can update these definitions accordingly. For symmetric clustering objectives, $\alpha$-metric perturbations are restricted to metrics.
For asymmetric clustering objectives, the $\alpha$-metric perturbations must satisfy the directed triangle inequality.

\begin{definition}(Metric perturbation resilience) \label{def:mpr}
A clustering instance $(S,d)$ satisfies \emph{$\alpha$-metric perturbation resilience} ($\alpha$-MPR) 
if for any $\alpha$-metric perturbation $d'$ of $d$, the optimal clustering $\mathcal{C'}$ under $d'$ is unique and equal to $\OPT$.
\end{definition}

In our arguments, we will sometimes convert a non-metric perturbation $d'$ into a metric perturbation by taking the
\emph{metric completion} $d''$ of $d'$ (also referred to as the \emph{shortest-path metric} on $d'$)
by setting the distances in $d''$ as the length of the shortest path on the graph whose edges are the lengths in $d'$.
Note that for all $u,v$, we have $d(u,v)\leq d''(u,v)$ since $d$ was originally a metric.

\subsection{Local Perturbation Resilience}

Now we define perturbation resilience for an optimal cluster rather than the entire dataset.
All prior work has considered perturbation resilience with respect to the entire dataset.


\begin{definition}(Local perturbation resilience)  
Given a clustering instance $(S,d)$ with optimal clustering $\mathcal{C}=\{C_1,\dots,C_k\}$,
an optimal cluster $C_i$ satisfies $\alpha$-perturbation resilience ($\alpha$-PR) if
for any $\alpha$-perturbation $d'$ of $d$, the optimal clustering $\mathcal{C}'$ under
$d'$ is unique and contains $C_i$.
\end{definition}

As a sanity check, we show that a clustering is perturbation resilient if and only if every
optimal cluster satisfies perturbation resilience.

\begin{fact} \label{fact:local-iff}
A clustering instance $(S,d)$ satisfies $\alpha$-PR if and only if each optimal cluster satisfies $\alpha$-PR.
\end{fact}

\begin{proof}
Given a clustering instance $(S,d)$, the forward direction follows by definition:
assume $(S,d)$ satisfies $\alpha$-PR, and given an optimal cluster $C_i$, then for each $\alpha$-perturbation $d'$, the optimal clustering
stays the same under $d'$, therefore $C_i$ is contained in the optimal clustering under $d'$.
Now we prove the reverse direction. Given a clustering instance with optimal clustering $\mathcal{C}$, and given an $\alpha$-perturbation
$d'$, let the optimal clustering under $d'$ be $\mathcal{C}'$. For each $C_i\in\mathcal{C}$, by assumption, $C_i$ satisfies $\alpha$-PR,
so $C_i\in\mathcal{C}'$. Therefore $\mathcal{C}=\mathcal{C}'$.
\end{proof}

Next we define the local version of $(\alpha,\epsilon)$-PR.

\begin{definition}(Local $(\alpha,\epsilon)$-perturbation resilience) \label{def:local-alpha-epsilon}
Given a clustering instance $(S,d)$ with optimal clustering $\mathcal{C}=\{C_1,\dots,C_k\}$, 
an optimal cluster $C_i$ satisfies $(\alpha,\epsilon)$-PR if for any $\alpha$-perturbation $d'$ of $d$,
the optimal clustering $\mathcal{C}'$ under $d'$ contains a cluster $C_i'$ which is $\epsilon$-close to $C_i$.
\end{definition}

In Sections \ref{sec:local} and \ref{sec:3eps}, we will consider a slightly stronger notion of local perturbation resilience.
Informally, an optimal cluster satisfies $\alpha$-strong local perturbation resilience if it is $\alpha$-PR, and all nearby optimal 
clusters are also $\alpha$-PR.
We will sometimes be able to prove guarantees for clusters satisfying strong local perturbation resilience which are not true under standard local perturbation resilience.

\begin{definition}(Strong local perturbation resilience)  \label{def:spr}
Given a clustering instance $(S,d)$ with optimal clustering $\mathcal{C}=\{C_1,\dots,C_k\}$,
an optimal cluster $C_i$ satisfies $\alpha$-strong local perturbation resilience ($\alpha$-SLPR) if
for each $j$ such that there exists $u\in C_i$, $v\in C_j$, and $d(u,v)\leq r^*$, then
$C_j$ is $\alpha$-PR (any cluster that is close to $C_i$ must be $\alpha$-PR).
\end{definition}

To conclude this section, we state a lemma for asymmetric (and symmetric) $k$-center
which allows us to reason about a specific class of $\alpha$-perturbations 
which will be important throughout the paper.
We give two versions of the lemma, each of which will be useful in different sections of the paper.

\begin{lemma} \label{lem:d'_same_cost}
Given a clustering instance $(S,d)$ and $\alpha\geq 1$,
\begin{enumerate}
\item assume we have an $\alpha$-perturbation $d'$ of $d$ with the following property:
for all $p,q$, if $d(p,q)\geq r^*$ then $d'(p,q)\geq \alpha r^*$.
Then the optimal cost under $d'$ is $\alpha r^*$.

\item assume we have an $\alpha$-perturbation $d'$ of $d$ with the following property:
for all $u,v$, either $d'(u,v)=\min(\alpha r^*,\alpha d(u,v))$ or $d'(u,v)=\alpha d(u,v)$.
Then the optimal cost under $d'$ is $\alpha r^*$.
\end{enumerate}
\end{lemma}

\begin{proof} 
\begin{enumerate}
\item Assume there exists a set of centers $C'=\{c_1',\dots, c_k'\}$ whose $k$-center cost under $d'$
is $<\alpha r^*$. Then for all $i$ and $s\in\text{Vor}_{C',d'}(c_i')$, $d'(c_i',s)<\alpha r^*$,
implying $d(c_i',s)< r^*$ by construction. It follows that the $k$-center cost of
$C'$ under $d$ is $r^*$, which is a contradiction.
Therefore, the optimal cost under $d'$ must be $\alpha r^*$.

\item Given $u,v$ such that $d(u,v)\geq r^*$, then $d'(u,v)\geq \alpha r^*$ by construction.
Now the proof follows from part one.
\end{enumerate}
\end{proof}

\section{$k$-center under perturbation resilience} \label{sec:2pr}

In this section, 
we provide efficient algorithms for finding the optimal clustering for symmetric and asymmetric instances of 
$k$-center under $2$-perturbation resilience.
Our results directly improve on the result by \cite{balcan2012clustering} for symmetric $k$-center under $(1+\sqrt{2})$-perturbation resilience. 
We also show that it is NP-hard to recover $\OPT$ even for symmetric $k$-center instance under $(2-\delta)$-perturbation resilience. 
As an immediate consequence, our results are tight for both symmetric and asymmetric $k$-center instances.
This is the first problem for which the exact value of perturbation resilience
is found ($\alpha=2$), where the problem switches from efficiently computable to NP-hard.

First, we show that any $\alpha$-approximation algorithm
returns the optimal solution for $\alpha$-perturbation resilient instances.
An immediate consequence is an algorithm for symmetric $k$-center under 2-perturbation resilience.
Next, we provide a novel algorithm for asymmetric $k$-center under 2-perturbation resilience.
Finally, we show hardness of $k$-center under $(2-\delta)$-PR.

\subsection{$\alpha$-approximations are optimal under $\alpha$-PR}

The following theorem shows that any $\alpha$-approximation
algorithm for $k$-center will return the optimal solution on clustering
instances that are $\alpha$-perturbation resilient.

\begin{theorem} \label{thm:approximation}
Given a clustering instance $(S,d)$ satisfying $\alpha$-perturbation resilience for 
asymmetric $k$-center, and a set $C$ of $k$ centers which is an $\alpha$-approximation, i.e., $\forall p\in S$, 
$\exists c\in C$ such that $d(c,p)\leq \alpha r^*$,
then the Voronoi partition induced by $C$ is the optimal clustering.
\end{theorem}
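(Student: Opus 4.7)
The plan is to use $\alpha$-perturbation resilience indirectly: I will construct a specific $\alpha$-perturbation $d'$ of $d$ under which the $d$-Voronoi partition of $C$ is itself an optimal $k$-center clustering. Since $\alpha$-perturbation resilience forces the (unique) optimum under any $\alpha$-perturbation to coincide with $\OPT$, this identifies the $d$-Voronoi partition of $C$ with $\OPT$ and yields the theorem.

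The perturbation I would use is designed so that Lemma~\ref{lem:d'_same_cost} applies while simultaneously letting $C$ attain the new optimum. Concretely, set
\[
 d'(p,q) \;=\; \begin{cases} d(p,q) & \text{if } p\in C \text{ and } d(p,q) < r^*,\\ \alpha r^* & \text{if } p\in C \text{ and } r^* \le d(p,q) \le \alpha r^*,\\ \alpha\, d(p,q) & \text{otherwise.} \end{cases}
\]
A per-case check shows $d \le d' \le \alpha d$, so $d'$ is a valid $\alpha$-perturbation, and a parallel check shows that whenever $d(p,q) \ge r^*$ we have $d'(p,q) \ge \alpha r^*$, so Lemma~\ref{lem:d'_same_cost} pins the optimal $d'$-cost at exactly $\alpha r^*$. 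For any $c_j'\in C$ and any $q$ in its $d$-Voronoi cell $V_j$, the $\alpha$-approximation hypothesis gives $d(c_j',q) \le \alpha r^*$, so by the three cases $d'(c_j',q) \le \alpha r^*$; hence $C$ together with the cells $V_1,\dots,V_k$ attains cost $\alpha r^*$ under $d'$. A brief case split on where $d(c_i',q)$ lies relative to $r^*$ and $\alpha r^*$ further confirms $d'(c_j',q) \le d'(c_i',q)$ for every $i\neq j$, so the partition $\{V_j\}$ is a $d'$-Voronoi partition of $C$ and thus a legitimate $k$-center clustering under $d'$. Being optimal, $\alpha$-perturbation resilience identifies it with $\OPT$.

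The main obstacle is reconciling two conflicting demands on $d'$: to invoke Lemma~\ref{lem:d'_same_cost} every $d$-distance $\ge r^*$ must inflate to at least $\alpha r^*$ under $d'$, while to keep $C$ feasible at cost $\alpha r^*$ every $d'$-distance from a $c\in C$ to a point in its cell must stay $\le \alpha r^*$. These demands collide precisely in the medium range $p\in C$ with $d(p,q)\in[r^*,\alpha r^*]$, and the only reconciliation is to clamp $d'(p,q)$ to the common value $\alpha r^*$, which lies in the legal range $[d(p,q),\alpha d(p,q)]$. Identifying this clamp and verifying it is compatible with both Lemma~\ref{lem:d'_same_cost} and the $d'$-Voronoi structure of $C$ is the technical heart of the argument; after that the remaining steps are routine case analyses.
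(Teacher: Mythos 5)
Your proof is correct and follows essentially the same strategy as the paper's: construct an $\alpha$-perturbation $d'$ satisfying the hypothesis of Lemma~\ref{lem:d'_same_cost}, observe that $C$ attains the forced optimal cost $\alpha r^*$ under $d'$, invoke $\alpha$-perturbation resilience, and then check that the $d$-Voronoi and $d'$-Voronoi cells of $C$ agree. The only difference is the choice of $d'$: the paper scales every distance by $\alpha$ and then caps only $d'(c(p),p)$ at $\alpha r^*$ for each point's nearest center $c(p)$, which makes the Voronoi-preservation step immediate (the nearest-center distance is the only one that shrinks relative to the uniform scaling), whereas your $d'$ clamps every center-to-point distance in $[r^*,\alpha r^*]$ to $\alpha r^*$ and leaves sub-$r^*$ center-to-point distances unchanged, which is equally valid but requires the short case split you sketch to confirm the Voronoi cells are preserved.
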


\begin{proof}
For a point $p\in S$, let $c(p):=\text{argmin}_{c\in C} d(c,p)$, 
the closest center in $C$ to $p$. 
The idea is to construct an $\alpha$-perturbation in which $C$ is the
optimal solution by increasing all distances except between $p$
and $c(p)$, for all $p$. Then the theorem will follow by using the
definition of perturbation resilience.

By assumption, $\forall p\in S$, $d(c(p),p)\leq \alpha r^*$.
Create a perturbation $d'$ as follows. 
Increase all distances  by a factor of $\alpha$, 
except for all $p\in S$,
set $d'(c(p),p)=\min(\alpha d(c(p),p),\alpha r^*)$ (recall in Definition \ref{def:pr}, the
perturbation need not satisfy the triangle inequality).
Then no distances were increased by more than a
factor of $\alpha$. And since we had that $d(c(p),p)\leq \alpha r^*$, no distances decrease either.
Therefore, $d'$ is an $\alpha$-perturbation of $d$.
By Lemma \ref{lem:d'_same_cost}, the optimal cost for $d'$ is $\alpha r^*$.
Also, $C$ achieves cost $\leq\alpha r^*$ by construction, 
so $C$ is an optimal set of centers
under $d'$. Then by $\alpha$-perturbation resilience, 
the Voronoi partition induced by $C$ under $d'$ is
the optimal clustering. 

Finally, we show the Voronoi partition of $C$ under $d$ is the same as the
Voronoi partition of $C$ under $d'$. Given $p\in S$ whose closest point in $C$
is $c(p)$ under $d$, then under $d'$, all distances from $p$ to $C\setminus\{c(p)\}$ increased by exactly $\alpha$, and $d(p,c(p))$ increased by $\leq\alpha$.
Therefore, the closest point in $C$ to $p$ under $d'$ is still $c(p)$.
\end{proof}

\subsection{$k$-center under 2-PR} 

An immediate consequence of Theorem \ref{thm:approximation} is that 
we have an exact algorithm for symmetric $k$-center under 2-perturbation resilience
by running a simple 2-approximation algorithm (e.g., \citep{gonzalez1985clustering,hochbaum1985best,dyer1985simple}).
However, Theorem \ref{thm:approximation} only gives an algorithm for asymmetric $k$-center
under $O(\log^*(k))$-perturbation resilience.
Next, we show it is possible to substantially improve the latter result.

\subsubsection{Asymmetric $k$-center under 2-PR} \label{sec:2pr-asymmetric}
One of the challenges involved in dealing with asymmetric $k$-center instances 
is the fact that even though for all $p\in C_i$,  
$d(c_i,p)\leq r^*$, the reverse distance, $d(p,c_i)$, might be arbitrarily large.
Such points for which $d(p,c_i)\gg r^*$ pose a challenge to the structure of the clusters, as
they can be very close to points or even centers of other clusters.
To deal with this challenge,
we first define a set of ``good'' points, $A$, such that
$A=\{p\mid \forall q, d(q,p)\leq r^* \implies d(p,q)\leq r^*\}$.
\footnote{
A ``good'' point is referred to as a \emph{center-capturing vertex} in other works, e.g., \citep{vishwanathan}.
We formally define this notion in Section \ref{sec:local}.
}
Intuitively speaking, these points behave similarly to a set of points with symmetric distances up to a distance $r^*$. To explore this, we define a desirable property of $A$ with respect to the optimal clustering.
\vspace*{-.2cm}
\begin{definition} \label{def:respect}
$A$ is said to \emph{respect the structure of $\OPT$} if  

(1) $c_i\in A$ for all $i\in [k]$, and 

(2) for all $p\in S\setminus A$, if $A(p):=\arg\min_{q\in A}d(q,p)\in C_i$, then $p\in C_i$.
\end{definition}
\vspace*{-.2cm}
For all $i$, define $C_i'=C_i\cap A$ (which is in fact the optimal clustering of $A$).
Satisfying Definition \ref{def:respect} implies that if we can optimally cluster $A$,
then we can optimally cluster the entire
instance (formalized in Theorem \ref{thm:akc2pr}).
Thus our goal is to show that $A$ does indeed respect the structure of $\OPT$, and to show how
to return $C_1',\dots, C_k'$.

Intuitively, $A$ is similar to a symmetric 2-perturbation resilient clustering instance.
However, some structure is no longer there, for instance, a point $p$ may be at distance
$\leq 2r^*$ from every point in a different cluster, which is not true for 2-perturbation resilient
instances. 
This implies we cannot simply run a 2-approximation algorithm on the set $A$,
as we did in the previous section.
However, we show that the remaining structural properties are sufficient to 
optimally cluster $A$.
To this end, we define two properties and show how they lead to an algorithm that returns $C_1',\dots,C_k'$,
and help us prove that $A$ respects the structure of $\OPT$.

The first of these properties requires each point to be closer to its center than any point in another cluster.

\emph{Property (1): For all $p\in C_i'$ and $q\in C_{j\neq i}'$, $d(c_i,p)<d(q,p)$. }

The second property requires that any point within distance $r^*$ of a cluster center belongs to that cluster.

\emph{Property (2): For all $i\neq j$ and $q\in C_j$,  $d(q,c_i) > r^*$}
(see Figure \ref{fig:2pr-mainbody}).
\footnote{
Property (1) first appeared  in the work of \cite{awasthi2012center}, 
for symmetric clustering instances.
A weaker variation of Property (2) was introduced by \cite{balcan2012clustering}, which showed that in $1+\sqrt 2$-perturbation resilient instances for any cluster $C_i$ with radius $r_i$, $B_{r_i}(c_i)=C_i$. Our Property (2) shows that this is true \emph{for a universal radius, $r^*$}, even for $2$-perturbation resilient instances, \emph{and}
even for asymmetric instances.
}

\begin{figure}
    \centering
    \begin{subfigure}[b]{0.35\textwidth}
        \includegraphics[width=\textwidth]{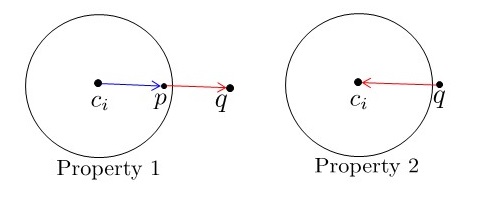}
        \caption{\small Properties of $2$-perturbation resilience}
        \label{fig:properties}
    \end{subfigure}
    \quad\quad\quad\quad
    \begin{subfigure}[b]{0.45\textwidth}
        \includegraphics[width=\textwidth]{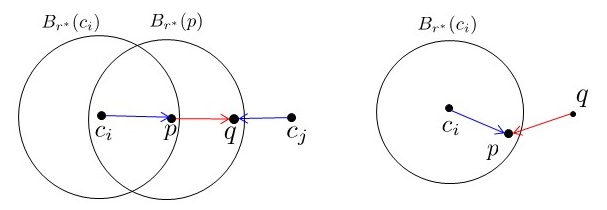}
        \caption{\small Demonstrating the correctness of Algorithm~\ref{alg:asymmetric}}
        \label{fig:proof-alg}
    \end{subfigure}
    \caption{Properties of a $2$-perturbation resilient instance of aymmetric $k$-center that are used for clustering. }
    \label{fig:2pr-mainbody}
\end{figure}

Let us illustrate how these properties allow us to optimally cluster $A$. 
\footnote{
Other algorithms work, such as single linkage with dynamic programming at the end
to find the minimum cost pruning of $k$ clusters. However, our algorithm is able
to recognize optimal clusters \emph{locally} (without a complete view of the point set).
}
Consider a ball of radius $r^*$ around a center $c_i$. By Property 2, 
such a ball exactly captures $C_i'$. Furthermore, by Property 1, any point in this ball is closer to the center than to points outside of the ball. Is this true for a ball of radius $r^*$ around a general point $p$? Not necessarily. 
If this ball contains a point $q\in C_j'$ from a different cluster, then $q$ will be closer to a point outside
the ball than to $p$ (namely, $c_j$, which is guaranteed to be outside of the ball by Property 2).
This allows us to determine that the center of such a ball must not be an optimal center.

This structure motivates our Algorithm \ref{alg:asymmetric} for asymmetric $k$-center under
2-perturbation resilience.
At a high level, we start by constructing the set $A$ (which can be done
easily in polynomial time).
Then we create the
set of all balls of radius $r^*$ around all points in $A$ 
(if $r^*$ is not known, we can use a guess-and-check wrapper). 
Next, we prune this set by throwing out any ball that contains a point farther from its center than to a point outside the ball. We also throw out any ball that is a subset of another one. Our claim is that the remaining balls are exactly $C_1',\dots,C_k'$.
Finally, we add the points in $S\setminus A$ to their closest point in $A$.

\begin{algorithm}[h]
\caption {\textsc{Asymmetric $k$-center algorithm under 2-PR}}
\label{alg:asymmetric}
\begin{algorithmic}
\Require{ Asymmetric $k$-center instance $(S, d)$, distance $r^*$ (or try all possible candidates)} \\
\textbf{Create symmetric set}
\begin{itemize}
\item Build set $A=\{p\mid \forall q, d(q,p)\leq r^* \implies d(p,q)\leq r^*\}$ 
\end{itemize}
\textbf{Create candidate balls}
\begin{itemize}
\item $\forall c\in A$, construct $G_c=B_{r^*}(c)$
(the ball of radius $r^*$ around $c$). 
\end{itemize}
\textbf{Prune balls}
\begin{itemize}
\item $\forall G_c$, if $\exists p\in G_c$, $q\notin G_c$ such that $d(q,p)<d(c,p)$, then
throw out $G_c$. \label{step2} 
\item $\forall p,q$ such that $G_p\subseteq G_q$, throw out $G_p$. \label{step3} 
\end{itemize}
\textbf{Insert remaining points}
\begin{itemize}
\item $\forall p\notin A$, add $p$ to $G_q$, where $q=\arg\min_{s\in A}d(s,p)$.
\end{itemize}
\Ensure{sets $G_1,\dots,G_k$ }
\end{algorithmic}
\end{algorithm}

\paragraph{Formal details of our analysis}

\begin{lemma}\label{lem:properties}
Properties 1 and 2 hold for asymmetric $k$-center instances satisfying 2-perturbation resilience.
\end{lemma}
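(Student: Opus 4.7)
The plan is to prove each property by contradiction, in each case constructing a carefully chosen 2-perturbation $d'$ that satisfies the hypothesis of Lemma~\ref{lem:d'_same_cost}, and then exhibiting an alternative set of $k$ centers that achieves the optimal cost $2r^*$ under $d'$ while inducing a Voronoi partition different from $\OPT$, contradicting 2-perturbation resilience. The common template is: multiply every pairwise distance by $2$, except for a small set of ``saved'' distances capped at $2r^*$, engineered so that some non-optimal point can masquerade as a cluster center.

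For Property~2, I would assume there exist $i \ne j$ and $q \in C_j$ with $d(q, c_i) \le r^*$, and construct $d'$ by multiplying every distance by $2$ except set $d'(q, p) = 2r^*$ for each $p \in C_i$. Since $d(q, p) \le d(q, c_i) + d(c_i, p) \le 2r^*$ by the triangle inequality, each saved value is indeed a non-decrease, so $d'$ is a valid $2$-perturbation; and every pair originally at distance $\ge r^*$ ends up at distance $\ge 2r^*$ under $d'$, so Lemma~\ref{lem:d'_same_cost} gives optimal cost $2r^*$. The replacement center set $\{c_1, \dots, c_{i-1}, q, c_{i+1}, \dots, c_k\}$ achieves this cost, with $q$ covering $C_i$ at distance exactly $2r^*$. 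In the induced Voronoi partition, $q$ itself (now a center) is in a cluster disjoint from $c_j$, contradicting $q \in C_j$ in $\OPT$.

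For Property~1, I would run an analogous contradiction argument on $p \in C_i'$ and $q \in C_j'$ with $d(q, p) \le d(c_i, p)$. The fact that $q \in A$ is what makes this go through: from $d(c_j, q) \le r^*$ and $q \in A$ we get $d(q, c_j) \le r^*$, so by triangle inequality $d(q, p') \le 2r^*$ for every $p' \in C_j$, allowing $q$ to serve as a $2$-perturbation center for $C_j$. I would save distances from $q$ to points in $C_j$ (capping at $2r^*$) and argue that replacing $c_j$ by $q$ in the center set still achieves optimal cost $2r^*$ under the new $d'$, while $p$ now weakly prefers $q$ over $c_i$, forcing the Voronoi partition to disagree with $\OPT$. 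In fact, the Awasthi et al.~\cite{awasthi2012center} argument for the symmetric case never exploits symmetry in an essential way, so an alternative route is to transport their proof verbatim and verify step by step, using the $A$-side symmetry noted above wherever $d(p,q) = d(q,p)$ was previously invoked.

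The main obstacle is handling the degenerate case when the ``new'' point $q$ coincides with $c_j$, so the replacement yields only $k - 1$ distinct centers, and the case of equality $d(q, p) = d(c_i, p)$ for Property~1, where uniqueness of the optimal clustering under $d'$ rather than an outright partition mismatch supplies the contradiction. In the first situation I would pad the center set with an arbitrary additional point from some large cluster and argue that the resulting Voronoi partition still disagrees with $\OPT$; in the second I would invoke the uniqueness clause in the definition of perturbation resilience. A secondary bookkeeping task is verifying that the hypothesis of Lemma~\ref{lem:d'_same_cost} holds for every constructed $d'$, which reduces to a quick case analysis over the saved pairs.
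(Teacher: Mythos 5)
Your plan tracks the paper's proof almost exactly: for each property you construct a $2$-perturbation that doubles every distance while ``saving'' the distances from a designated point $q$ to a cluster, invoke Lemma~\ref{lem:d'_same_cost} to pin the optimal cost at $2r^*$, and exhibit an alternative optimal center set whose Voronoi partition cannot equal $\OPT$, just as the paper does (the paper likewise notes that the Awasthi et al.\ proof of Property~1 works verbatim for asymmetric instances). Your extra attention to the $q=c_j$ degenerate case and the tie $d(q,p)=d(c_i,p)$ in Property~1, resolved via the uniqueness clause, is more careful than what the paper writes explicitly and is in the right spirit.

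The one genuine slip is in your Property~2 construction: you ``set $d'(q,p)=2r^*$'' flat for every $p\in C_i$, and verify only that $d(q,p)\leq 2r^*$, i.e.\ that the saved value is a non-decrease. But a valid $2$-perturbation also requires $d'(q,p)\leq 2\,d(q,p)$, and this \emph{fails} whenever $d(q,p)<r^*$ --- which is certainly allowed under your hypothesis, e.g.\ $p=c_i$ with $d(q,c_i)<r^*$, or any other point of $C_i$ that happens to be close to $q$. You checked the wrong direction of the perturbation constraint. The fix is exactly what you already write for Property~1 and what the paper uses in both places: set $d'(q,p)=\min(2r^*,\,2d(q,p))$, which is automatically sandwiched between $d(q,p)$ and $2d(q,p)$ and still satisfies the hypothesis of Lemma~\ref{lem:d'_same_cost}, since any pair originally at distance $\geq r^*$ is still pushed to exactly $2r^*$. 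With that correction the remainder of your Property~2 argument goes through as you describe.
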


\begin{proof}

\emph{Property 1:}
Assume false, $d(q,p)\leq d(c_i,p)$.
The idea will be that since $q$ is in $A$, it is close to its own center, so we can
construct a perturbation in which $q$ replaces its center $c_j$.
Then $p$ will join $q$'s cluster, causing a contradiction.
Construct the following $d'$:

\begin{equation*}
d'(s,t)=
\begin{cases}
\min( 2 r^*, 2 d(s,t) ) & \text{if } s=q \text{, } t\in C_j\cup\{p\}  \\
2 d(s,t) & \text{otherwise.}
\end{cases}
\end{equation*}

This is a $2$-perturbation because $d(q,C_j\cup\{p\})\leq 2 r^*$.
Then by Lemma~\ref{lem:d'_same_cost}, the optimal cost is $2 r^*$.
The set of centers $\{c_\ell\}_{i=1}^k\setminus\{c_j\}\cup\{q\}$ achieves the optimal
cost, since $q$ is distance $2 r^*$ from $C_j$, and all other 
clusters have the same center as in $\mathcal{OPT}$ (achieving radius $2 r^*$).
Then for all $c_\ell$, $d'(q,p)\leq d'(c_i,p)\leq d'(c_\ell,p)$.
And since $q\in A$, $d(q,c_j)\leq r^*$ so $d(q,C_j)\leq 2r^*$.
Then we can construct a 2-perturbation in which $q$ becomes the center of $C_j$,
and then $q$ is the best center for $p$, so we have a contradiction.

\emph{Property 2:}
Assume on the contrary that there exists $q\in C_j$, $i\neq j$ such that $d(q,c_i)\leq r^*$. 
Now we will define a $d'$ in which $q$ can become a center for $C_i$.

\begin{equation*}
d'(s,t)=
\begin{cases}
\min( 2 r^*,2 d(s,t) ) & \text{if } s=q \text{, } t\in C_i  \\
2 d(s,t) & \text{otherwise.}
\end{cases}
\end{equation*}

This is a 2-perturbation because $d(q,C_i)\leq 2 r^*$.
Then by Lemma~\ref{lem:d'_same_cost}, the optimal cost is $2 r^*$.
The set of centers $\{c_\ell\}_{i=1}^k\setminus\{c_i\}\cup\{q\}$ achieves the optimal
cost, since $q$ is distance $2 r^*$ from $C_i$, and all other 
clusters have the same center as in $\mathcal{OPT}$ (achieving radius $2 r^*$).
But the clustering with centers $\{c_\ell\}_{i=1}^k\setminus\{c_i\}\cup\{q\}$ is different
from $\mathcal{OPT}$, since (at the very least) $q$ and $c_i$ are in different clusters.
This contradicts 2-perturbation resilience.
\end{proof}

\begin{lemma}\label{lem:respect}
The set $A$ respects the structure of $\OPT$.
\end{lemma}

\begin{proof} 

From Lemma \ref{lem:properties}, we can use Property 2 in our analysis.
First we show that $c_i\in A$ for all $i\in [k]$.
Given $c_i$, $\forall p\in C_i$, then $d(c_i,p)\leq r^*$ by definition of $\OPT$.
$\forall q\notin C_i$, then by Property 2,
$d(q,c_i)>r^*$.
It follows that for any point $p\in S$, it cannot be the case that
$d(p,c_i)\leq r^*$ and $d(c_i,p)>r^*$. Therefore, $c_i\in A$.

Now we show that 
for all $p\in S\setminus A$, if $A(p)\in C_i$, then $p\in C_i$.
Given $p\in S\setminus A$, let $p\in C_i$ and assume towards contradiction that $q=A(p)\in C_j$ for some $i\neq j$.
We will construct a 2-perturbation $d'$ in which $q$ replaces $c_j$ as the center for $C_j$ and
$p$ switches from $C_i$ to $C_j$, causing a contradiction.
We construct $d'$ as follows.
All distances are increased by a factor of $2$ except for $d(q,p)$ and $d(q,q')$ for all $q'\in C_j$.
These distances are increased by a factor of $2$ up to $2 r^*$.
Formally,

\begin{equation*}
d'(s,t)=
\begin{cases}
\min( 2 r^*, 2 d(s,t) ) & \text{if } s=q \text{, } t\in C_j\cup\{p\}  \\
2 d(s,t) & \text{otherwise.}
\end{cases}
\end{equation*}

This is a $2$-perturbation because $d(q,C_j)\leq 2 r^*$.
Then by Lemma~\ref{lem:d'_same_cost}, the optimal cost is $2 r^*$.
The set of centers $\{c_\ell\}_{i=1}^k\setminus\{c_j\}\cup\{q\}$ achieves the optimal
cost, since $q$ is distance $2 r^*$ from $C_j$, and all other 
clusters have the same center as in $\mathcal{OPT}$ (achieving radius $2 r^*$).
But consider the point $p$. Since all centers are in $A$ and $q$ is the closest point
to $p$ in $A$, then $q$ is the center for $p$ under $d'$.
Therefore, the optimal clustering under $d'$ is different from $\OPT$, so we have
a contradiction.
\end{proof}

Now we are ready to show Algorithm \ref{alg:asymmetric} returns the optimal clustering.

\begin{theorem} \label{thm:akc2pr}
Algorithm \ref{alg:asymmetric} returns the exact solution for asymmetric $k$-center
under 2-perturbation resilience.
\end{theorem}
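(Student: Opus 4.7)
The plan is to show that after the pruning in steps 3 and 4, the surviving balls are exactly $G_{c_1},\dots,G_{c_k}$ (or balls equal to them as sets), which by Property 2 of Lemma \ref{lem:properties} coincide with $C_1',\dots,C_k'$. Step 5 then extends these to the optimal clustering of $S$ via Lemma \ref{lem:respect}.

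First I would argue that the optimal balls $G_{c_i}$ are constructed and survive every pruning step. By Lemma \ref{lem:respect} each $c_i\in A$, so $G_{c_i}=B_{r^*}(c_i)$ is produced in step 2. Property 2 rules out any $q\in C_j$ with $j\neq i$ being in $G_{c_i}$, and since $c_i\in A$ the defining property of $A$ gives $C_i\cap A\subseteq G_{c_i}$, so $G_{c_i}\cap A=C_i'$. Property 1 says that for every $p\in C_i'$ and every $q\in C_j'$ with $j\neq i$, $d(c_i,p)<d(q,p)$, so step 3 does not prune $G_{c_i}$. Finally, since the $C_i'$ are pairwise disjoint and nonempty, no $G_{c_i}$ is a subset of another $G_{c_j}$, so they all survive step 4.

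Next I would show every other ball gets pruned. Fix $c\in A$ with $c\neq c_1,\dots,c_k$ and let $c\in C_i$, so $c\in C_i'$. If $G_c$ contains some $p\in C_j'$ with $j\neq i$, then Property 1 gives $d(c_j,p)<d(c,p)$, and Property 2 gives $d(c,c_j)>r^*$, hence $c_j\notin G_c$; thus step 3 prunes $G_c$. Otherwise $G_c\cap A\subseteq C_i'=G_{c_i}\cap A$, so step 4 prunes $G_c$ (any tie where $G_c=G_{c_i}$ as a set is harmless, since the surviving representative still equals $C_i'$). The subtle point I expect to cost the most effort is keeping the asymmetric distances straight: one must consistently interpret $B_{r^*}(\cdot)$ so that Property 2 translates into $G_{c_i}\subseteq C_i$, and one must track whether the distances compared in step 3 are $d(c,p)$ versus $d(q,p)$ in the right direction to invoke Property 1.

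Once the surviving balls are identified with $C_1',\dots,C_k'$, step 5 assigns each $p\in S\setminus A$ to the ball containing $A(p)=\arg\min_{q\in A}d(q,p)$. The second clause of Lemma \ref{lem:respect} states that if $A(p)\in C_i$ then $p\in C_i$, so the assignment is correct and the algorithm's output is precisely $C_1,\dots,C_k$, giving the theorem.
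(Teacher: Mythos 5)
Your proposal is correct and follows essentially the same three-step structure as the paper's proof: show $G_{c_i}=C_i'$ and that it survives both pruning steps via Properties~1 and~2, show every non-center ball is thrown out in step~3 or step~4, and invoke Lemma~\ref{lem:respect} for step~5. The only place to be extra careful (which you flag yourself) is that Property~2 controls $d(q,c_i)$ while membership in $G_{c_i}$ is governed by $d(c_i,q)$, so one needs the $A$-membership of $q$ to flip the direction; with that bookkeeping in place the argument matches the paper's.
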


\begin{proof}
In this proof, we refer to the first line of \textbf{Prune balls} in Algorithm~\ref{alg:asymmetric} as \textbf{Pruning step 1}
and the second line as \textbf{Pruning step 2}.
First we must show that after \textbf{Pruning step 2}, the remaining sets are exactly
$C_1',\dots,C_k'=C_1\cap A,\dots,C_k\cap A$.
We prove this in three steps: the sets $G_{c_i}$
correspond to $C_i'$, these sets are not thrown out in \textbf{Pruning step 1} and \textbf{Pruning step 2}, 
and all other sets are thrown out in steps \textbf{Pruning step 1} and \textbf{Pruning step 2}.
Because of Lemma \ref{lem:properties}, we can use Properties 1 and 2.

For all $ i$, $G_{c_i}=C_i'$:
From Lemma \ref{lem:respect}, all centers are in $A$, so $G_{c_i}$ will be created in
step 2.
For all $p\in C_i$, $d(c_i,p)\leq r^*$.
For all $q\notin C_i'$, then by Property 2, $d(q,c_i)>r^*$
(and since $c_i,q\in A$, $d(c_i,q)>r^*$ as well).
For all $i$, $G_{c_i}$ is not thrown out in step \textbf{Pruning step 1}:
Given $s\in G_{c_i}$ and $t\notin G_{c_i}$. Then $s\in C_i'$ and $t\in C_j'$ 
for $j\neq i$.
If $d(t,s)<d(c_i,s)$, then we get a contradiction from Property 1.
For all non-centers $p$, $G_p$ is thrown out in \textbf{Pruning step 1} or \textbf{Pruning step 2}:
From the previous paragraph, $G_{c_i}=C_i'$.
If $G_p\subseteq G_{c_i}$, then $G_p$ will be thrown out in \textbf{Pruning step 2}
(if $G_p=G_{c_i}$, it does not matter which set we keep, 
so without loss of generality say that we keep $G_{c_i}$).
Then if $G_p$ is not thrown out in \textbf{Pruning step 2}, $\exists s\in G_p\cap C_j'$, $j\neq i$.
If $s=c_j$, then $d(p,c_j)\leq r^*$ and we get a contradiction from Property 2.
So, we can assume $s$ is a non-center (and that $c_j\notin G_p$).
But $d(c_j,s)<d(p,s)$ from Property 1, and therefore $G_p$ will be thrown
out in \textbf{Pruning step 1}.
Thus, the remaining sets after \textbf{Pruning step 2} are exactly $C_1',\dots,C_k'$.

Finally, by Lemma \ref{lem:respect}, for each $p\in C_i\setminus A$,
$A(p)\in C_i$, so $p$ will be added to $G_{c_i}$. 
Therefore, the final output is $C_1,\dots,C_k$.
\end{proof}

\subsubsection{Hardness of $k$-center under perturbation resilience} \label{sec:hardness}

In this section, we show NP-hardness for $k$-center under $(2-\delta)$-perturbation resilience.
We show that if there exists a polynomial time algorithm which returns the optimal solution for
symmetric $k$-center under $(2-\delta)$-perturbation resilience,
\footnote{
In fact, our result holds even under the strictly stronger notion of 
\emph{approximation stability}~\citep{as}.
}
then $NP=RP$, 
even under the condition that the optimal clusters are all size $\geq\frac{n}{2k}$.
Because symmetric $k$-center is a special case of asymmetric
$k$-center, we have the same  hardness results for asymmetric $k$-center.
This proves Theorem \ref{thm:akc2pr} is tight with respect to the level of perturbation resilience assumed.

\begin{theorem} \label{thm:2eps_as_hard}
There is no polynomial time algorithm for finding
the optimal $k$-center clustering under $(2-\delta)$-perturbation resilience, even when assuming all optimal clusters are size $\geq \frac{n}{2k}$, unless $NP=RP$.
\end{theorem}

We show a reduction from a special case of Dominating Set which we call Unambiguous-Balanced-Perfect Dominating Set.
Below, we formally define this problem and all intermediate problems.
Part of our reduction is based off of the proof of \cite{lev}, who showed a reduction from a variant of dominating set
to the weaker problem of clustering under $(2-\delta)$-center proximity.
\footnote{
$\alpha$-center proximity is the property that for all $p\in C_i$ and $j\neq i$, $\alpha d(c_i,p)<d(c_j,p)$, and it follows from $\alpha$-perturbation resilience.
}
We use four NP-hard problems in a chain of reductions. 
Here, we define all of these problems up front.
We introduce the ``balanced'' variants of two existing problems.

\begin{definition}[3-Dimensional Matching (3DM) \citep{karp1972reducibility}]
We are given three disjoint sets $X_1$, $X_2$, and $X_3$ each of size $m$, and a set $T$ such that $t\in T$
is a triple $t=(x_1,x_2,x_3)$ where $x_1\in X_1$, $x_2\in X_2$, and $x_3\in X_3$.
The problem is to find a set $M\subseteq T$ of size $m$ which exactly hits all the elements
in $X_1\cup X_2\cup X_3$. In other words, for all pairs $(x_1,x_2,x_3),(y_1,y_2,y_3)\in M$, it is the case that
$x_1\neq y_1$, $x_2\neq y_2$, and $x_3\neq y_3$.
\end{definition}

\begin{definition}[Balanced-3-Dimensional Matching (B3DM)]
This is the 3DM problem $(X_1,X_2,X_3,T)$ with the additional constraint that
$2m\leq |T|\leq 3m$, where $|X_1|=|X_2|=|X_3|=m$.
\end{definition}

\begin{definition}[Perfect Dominating Set (PDS)~\citep{lev}]
Given a graph $G=(V,E)$ and an integer $k$,
the problem is to find a set of vertices $D\subseteq V$ of size $k$ such that for all $v\in V\setminus D$,
there exists \emph{exactly one} $d\in D$ such that $(v,d)\in E$ (then we say $v$ ``hits'' $d$).
\end{definition}

\begin{definition}[Balanced-Perfect-Dominating Set (BPDS)]
This is the PDS problem $(G,k)$ with the additional assumption that if the graph has $n$
vertices and a dominating set of size $k$ exists, then each vertex in the dominating set hits at least $\frac{n}{2k}$ vertices.
\end{definition}

Additionally, each problem has an ``Unambiguous'' variant, which is the added constraint
that the problem has at most one solution.
\cite{valiant} showed that Unambiguous-3SAT is hard unless $NP=RP$.
To show the Unambiguous version of another problem is hard, one must establish a
parsimonious polynomial time reduction from Unambiguous-3SAT to that problem.
A parsimonious reduction is one that conserves the number of solutions.
For two problems $A$ and $B$, we denote $A\leq_{par} B$ to mean there is a reduction
from $A$ to $B$ that is parsimonious \emph{and} polynomial time.
Some common reductions involve 1-to-1 mappings which are easy to verify parsimony, but many other common reductions
are not parsimonious.
For instance, the standard reduction from 3SAT to 3DM is not parsimonious \citep{kleinberg2006algorithm},
yet there is a more roundabout series of reductions which all use 1-to-1 mappings, and are therefore easy to verify parsimony.
In order to prove Theorem \ref{thm:2eps_as_hard}, we start with the claim that Unambiguous-BPDS is hard unless $NP=RP$.
We use a parsimonious series of reductions from 3SAT to B3DM to BPDS.
All of these reductions are from prior work, yet we verify parsimony and balancedness.

\begin{lemma} \label{lem:hardness}
There is no polynomial time algorithm for Unambiguous-BPDS unless $NP=RP$.
\end{lemma}

\begin{proof}
We give a parsimonious series of reductions from 3SAT to B3DM to BPDS.
Then it follows from the result by \cite{valiant} that there is no polynomial time algorithm for Unambiguous-BPDS unless $NP=RP$.

To show that B3DM is NP-hard, we use the reduction of \cite{dyer}
who showed that Planar-3DM is NP-hard.
While planarity is not important for the purpose of our problems, their reduction 
from 3SAT has two other nice properties that we crucially use.
First, the reduction is parsimonious, as pointed out by \cite{hunt}.
Second, given their 3DM instance $X_1,X_2,X_3,T$,
each element in $X_1\cup X_2\cup X_3$ appears in either two or three  tuples in $T$.
(\cite{dyer} mention this observation just before their Theorem 2.3.)
From this, it follows that $2m\leq |T|\leq 3m$,
and so their reduction proves that B3DM is $NP$-hard via a parsimonious reduction from 3SAT.

Next, we reduce B3DM to BPDS using a reduction similar to the reduction by \cite{lev}.
Their reduction maps every element in $X_1\cup X_2\cup X_3\cup T$ to a vertex in $V$, and adds
one extra vertex $v$ to $V$.
There is an edge from each element $(x_1,x_2,x_3)\in T$ to the 
corresponding elements $x_1\in X_1$, $x_2\in X_2$, and $x_3\in X_3$.
Furthermore, there is an edge from $v$ to every element in $T$.
\cite{lev} show that if the 3DM instance is a {\sc yes} instance with
matching $M\subseteq T$ then the minimum dominating set is $v\cup M$.
Now we will verify this same reduction can be used to reduce B3DM to BPDS.
If we start with B3DM, our graph has $|X_1|+|X_2|+|X_3|+|T|+1\leq 6m+1$ vertices since $|T|\leq 3m$, so $n\leq 6m+1$.
Also note that in the {\sc yes} instance, the dominating set is size $m+1$ by construction.
Therefore, to verify the reduction to BPDS, we must show that in the {\sc yes} instance, each node in the dominating set hits $\geq\frac{6m+1}{2(m+1)}$ nodes.
Given $t\in M$, $t$ hits 3 nodes in the graph, and $\frac{n}{2(m+1)}\leq \frac{6m+1}{2m+2}\leq 3$.
The final node in the dominating set is $v$, and $v$ hits $|T|-m\geq 2m-m=m$ nodes, and $\frac{6m+1}{2(m+1)}\leq m$ when $m\geq 3$.
Therefore, the resulting instance is a BPDS instance.

Now we have verified that there exists a parsimonious reduction 
3SAT $\leq_{par}$ BPDS, so it follows that there is no polynomial time algorithm for Unambiguous-BPDS unless $NP=RP$.
\end{proof}

Now we can prove Theorem \ref{thm:2eps_as_hard} by giving a reduction from Unambiguous-BPDS to $k$-center clustering under 
$(2-\delta)$-perturbation resilience, where all clusters are size $\geq \frac{n}{2k}$.
We use the same reduction as \cite{lev}, but
we must verify that the resulting instance is $(2-\delta)$-perturbation resilient.
Note that our reduction requires the Unambiguous variant while the reduction of \cite{lev} does not, since we are reducing to a stronger problem.

\begin{proof}[Proof of Theorem \ref{thm:2eps_as_hard}]
From Lemma \ref{lem:hardness}, Unambiguous-BPDS is NP-hard unless $NP=RP$.
Now for all $\delta>0$, we reduce from Unambiguous-BPDS to $k$-center clustering 
and show the resulting instance has all cluster sizes $\geq \frac{n}{2k}$ 
and satisfies $(2-\delta)$-perturbation resilience.

Given an instance of Unambiguous-BPDS, for every $v\in V$, create a point $v\in S$ in the clustering instance. 
For every edge $(u,v)\in E$, let $d(u,v) = 1$, otherwise let $d(u,v) = 2$. 
Since all distances are either $1$ or $2$, the triangle inequality is trivially satisfied. 
Then a $k$-center solution of cost 1 exists if and only if there exists a dominating set of size $k$.

Since each vertex in the dominating set hits at least $\frac{n}{2k}$ vertices, the resulting clusters will be size at least $\frac{n}{2k}+1$.
Additionally, if there exists a dominating set of size $k$, then the corresponding optimal $k$-center
clustering has cost 1. Because this dominating set is perfect and unique, any other clustering has cost 2.
It follows that the $k$-center instance is $(2-\delta)$-perturbation resilient.
\end{proof}

\section{$k$-center under metric perturbation resilience} \label{sec:metric}

In this section, we extend the results from Section \ref{sec:2pr} to the metric perturbation resilience setting \citep{angelidakis2017algorithms}.
We first give a generalization of Lemma \ref{lem:d'_same_cost} to show that it can be extended to metric perturbation resilience.
Then we show how this immediately leads to corollaries of Theorem \ref{thm:approximation} and Theorem \ref{thm:akc2pr} extended to the
metric perturbation resilience setting.

Recall that in the proofs from the previous section, we created $\alpha$-perturbations $d'$ by increasing all distances by $\alpha$, 
except a few distances $d(u,v)\leq \alpha r^*$ which we increased to $\min(\alpha d(u,v),\alpha r^*)$.
In this specific type of $\alpha$-perturbation, we used the crucial property that the optimal clustering has cost $\alpha r^*$ 
(Lemma \ref{lem:d'_same_cost}).
However, $d'$ may be highly non-metric, so our challenge is arguing that the proof still goes through after taking the metric
completion of $d'$  (recall the metric completion of $d'$ is defined as the shortest path metric on $d'$).
In the following lemma, we show that Lemma \ref{lem:d'_same_cost} remains true after taking the metric completion of the perturbation.

\begin{lemma} \label{lem:d'_metric}
Given $\alpha\geq 1$ and an asymmetric $k$-center clustering instance $(S,d)$ with optimal radius $r^*$,
let $d''$ denote an $\alpha$-perturbation such that for all $u,v$, either
$d''(u,v)=\min(\alpha r^*,\alpha d(u,v))$ or $d''(u,v)=\alpha d(u,v)$.
Let $d'$ denote the metric completion of $d''$.
Then $d'$ is an $\alpha$-metric perturbation of $d$, and the optimal cost under $d'$ is $\alpha r^*$.
\end{lemma}

\begin{proof} 

By construction, $d'(u,v)\leq d''(u,v)\leq\alpha d(u,v)$.
Since $d$ satisfies the triangle inequality, we have that $d(u,v)\leq d'(u,v)$,
so $d'$ is a valid $\alpha$-metric perturbation of $d$.

Now given $u,v$ such that $d(u,v)\geq r^*$, we will prove that $d'(u,v)\geq \alpha r^*$.
By construction, $d''(u,v)\geq\alpha r^*$.
Then since $d'$ is the metric completion of $d''$, there exists a path 
$u=u_0$--$u_1$--$\cdots$--$u_{s-1}$--$u_s=v$
such that $d'(u,v)=\sum_{i=0}^{s-1}d'(u_i,u_{i+1})$ and
for all $0\leq i\leq s-1$, $d'(u_i,u_{i+1})=d''(u_i,u_{i+1})$.

Case 1: there exists an $i$ such that $d''(u_i,u_{i+1})\geq\alpha r^*$.
Then $d'(u,v)\geq\alpha r^*$ and we are done.

Case 2: for all $0\leq i\leq s-1$, $d''(u_i,u_{i+1})<\alpha r^*$.
Then by construction, $d'(u_i,u_{i+1})=d''(u_i,u_{i+1})=\alpha d(u_i,u_{i+1})$,
and so $d'(u,v)=\sum_{i=0}^{s-1}d'(u_i,u_{i+1})=\alpha \sum_{i=0}^{s-1}d(u_i,u_{i+1})\geq
\alpha d(u,v)\geq \alpha r^*$.

We have proven that for all $u,v$, if $d(u,v)\geq r^*$, then $d'(u,v)\geq\alpha r^*$.
Then by Lemma~\ref{lem:d'_same_cost}, the optimal cost under $d'$ must be $\alpha r^*$.
\end{proof}

Recall that metric perturbation resilience states that the optimal solution does not change under
any metric perturbation to the input distances.
In the proofs of Theorems \ref{thm:approximation} and \ref{thm:akc2pr},
the only perturbations constructed were the type as in Lemma \ref{lem:d'_same_cost}.
Since Lemma \ref{lem:d'_metric} shows this type of perturbation is indeed a metric,
Theorems \ref{thm:approximation} and \ref{thm:akc2pr} are true even under metric perturbation resilience.

\begin{corollary} \label{cor:approximation}
Given a clustering instance $(S,d)$ satisfying $\alpha$-metric perturbation resilience for 
asymmetric $k$-center, and a set $C$ of $k$ centers which is an $\alpha$-approximation, i.e., $\forall p\in S$, 
$\exists c\in C$ such that $d(c,p)\leq \alpha r^*$,
then the Voronoi partition induced by $C$ is the optimal clustering.
\end{corollary}

\begin{corollary} \label{cor:akc2pr}
Algorithm \ref{alg:asymmetric} returns the exact solution for asymmetric $k$-center
under 2-metric perturbation resilience.
\end{corollary}

\section{$k$-center under local perturbation resilience} \label{sec:local}

In this section, we further extend the results from Sections \ref{sec:2pr} and \ref{sec:metric} to the local perturbation resilience setting.
First we show that any $\alpha$-approximation to $k$-center will return each optimal $\alpha$-MPR cluster,
i.e., Corollary \ref{cor:approximation} holds even in the local perturbation resilience setting.
Then for asymmetric $k$-center, we show that a natural modification to the
$O(\log^* n)$ approximation algorithm of \cite{vishwanathan}
leads to an algorithm that maintains its performance in the worst case, while exactly returning each optimal cluster located within a 2-MPR region of the dataset.
This generalizes Corollary \ref{cor:akc2pr}.

\subsection{Symmetric $k$-center}

In section \ref{sec:2pr}, we showed that any $\alpha$-approximation algorithm for $k$-center returns the optimal solution
for instances satisfying $\alpha$-perturbation resilience (and this was generalized to metric perturbation resilience in the previous section).
In this section, we extend this result to the local perturbation resilience setting. We show that any $\alpha$-approximation will return
each (local) $\alpha$-MPR cluster. For example, if a clustering instance is half 2-perturbation resilient, running a 2-approximation algorithm will return
the optimal clusters for half the dataset, and a 2-approximation for the other half.

\begin{theorem} \label{thm:kcenter}
Given an asymmetric $k$-center clustering instance $(S,d)$, a set $C$ of $k$ centers which is an $\alpha$-approximation, 
and a clustering $\mathcal{C}$ defined as the Voronio partition induced by $C$,
then each $\alpha$-MPR cluster is contained in $\mathcal{C}$.
\end{theorem}

The proof is very similar to the proof of Theorem \ref{thm:approximation}.
The key difference is that we reason about each perturbation resilient cluster individually, rather than reasoning about the global structure of perturbation
resilience.

\begin{proof}[Proof of Theorem~\ref{thm:kcenter}]

Given an $\alpha$-approximate solution $\mathcal{C}$ to a clustering instance $(S,d)$, and given
an $\alpha$-MPR cluster $C_i$, we will create an $\alpha$-perturbation as follows. 
Define $\mathcal{C}(v):=\text{argmin}_{c\in\mathcal{C}}d(c,v)$.
For all $v\in S$, set $d''(v,\mathcal{C}(v))=\min\{\alpha r^*,\alpha d(v,\mathcal{C}(v))\}$.
For all other points $u\in S$, set $d''(v,u)=\alpha d(v,u)$.
Then by Lemma \ref{lem:d'_metric}, the metric completion $d'$ of $d''$ is an $\alpha$-perturbation of $d$
with optimal cost $\alpha r^*$.
By construction, the cost of $\mathcal{C}$ is $\leq \alpha r^*$ under $d'$, therefore, $\mathcal{C}$ is an optimal clustering.
Denote the set of centers of $\mathcal{C}$ by $C$.
By definition of $\alpha$-MPR, there exists $v_i\in C$ such that
$\text{Vor}_{C,d'}(v_i)=C_i$. Now, given $v\in C_i$, $\text{argmin}_{u\in C}d'(u,v)=v_i$,
so by construction, $\text{argmin}_{u\in C}d(u,v)=v_i$. Therefore, $\text{Vor}_{C,d}(v_i)=C_i$, so $C_i\in\mathcal{C}$.

\end{proof}

\subsection{Asymmetric $k$-center}

In Section \ref{sec:2pr}, we gave an algorithm which outputs the optimal clustering for asymmetric $k$-center under 2-perturbation resilience 
(Algorithm \ref{alg:asymmetric} and Theorem \ref{thm:akc2pr}), and we extended it to metric perturbation resilience in Section \ref{sec:metric}.
In this section, we extend the result further to the local perturbation resilience setting, and we show how to add a worst-case guarantee of $O(\log^* n)$.
Specifically, we give a new algorithm, which is a natural modification to the $O(\log^* n)$ approximation algorithm of \cite{vishwanathan},
and show that it maintains the $O(\log^* n)$ guarantee in the worst case while returning each optimal perturbation resilient cluster in its own superset.
As a consequence, if the entire clustering instance satisfies 2-metric perturbation resilience, then the output of our algorithm is the optimal clustering.

\begin{theorem} \label{thm:asy_local}
Given an asymmetric $k$-center clustering instance $(S,d)$ of size $n$ with optimal clustering $\{C_1,\dots,C_k\}$,
for each 2-MPR cluster $C_i$, there exists a cluster outputted by Algorithm~\ref{alg:asy-pr} 
that is a superset of $C_i$ and does not contain any other 2-MPR cluster.
\footnote{Formally, given the output clustering $\mathcal{C}'$ of Algorithm \ref{alg:asy-pr}, for all 2-MPR clusters $C_i$ and $C_j$,
there exists $C_i'\in\mathcal{C}'$ such that $C_i\subseteq C_i'$ and $C_j\cap C_i'=\emptyset$.
}
Furthermore, the overall clustering returned by Algorithm \ref{alg:asy-pr} is an $O(\log^* n)$-approximation.
\end{theorem}

At the end of this section, we will also show an algorithm that outputs an optimal cluster $C_i$ exactly,
if $C_i$ and any optimal cluster near $C_i$ are 2-MPR.

\paragraph{Approximation algorithm for asymmetric $k$-center}
We start with a recap of the $O(\log^* n)$-approximation algorithm by \cite{vishwanathan}.
This was the first nontrivial algorithm for asymmetric $k$-center, 
and the approximation ratio was later proven to be tight by \citep{chuzhoy2005asymmetric}.
To explain the algorithm, it is convenient to think of asymmetric $k$-center as a set covering problem.
Given an asymmetric $k$-center instance $(S,d)$,
define the directed graph $D_{(S,d)}=(S,A)$, where $A=\{(u,v)\mid d(u,v)\leq r^*\}$.
For a point $v\in S$, we define $\Gamma_{\text{in}}(v)$ and $\Gamma_{\text{out}}(v)$ as the set of vertices with an arc to and from $v$, respectively,
in $D_{(S,d)}$.
The asymmetric $k$-center problem is equivalent to finding a subset $C\subseteq S$ of size
$k$ such that $\cup_{c\in C}\Gamma_{\text{out}}(c)=S$.
We also define $\Gamma_{\text{in}}^x(v)$ and $\Gamma_{\text{out}}^x(v)$ as the set of vertices which have a path of length $\leq x$ 
to and from $v$ in $D_{(S,d)}$, respectively, 
and we define $\Gamma_{\text{out}}^x(A)=\bigcup_{v\in A}\Gamma_{\text{out}}^x(v)$ for a set $A\subseteq S$, 
and similarly for $\Gamma_{\text{in}}^x(A)$.
It is standard to assume the value of $r^*$ is known; since it is one of $O(n^2)$ distances, 
the algorithm can search for the correct value in polynomial time.
\cite{vishwanathan} uses the following concept.

\begin{definition}
Given an asymmetric $k$-center clustering instance $(S,d)$,
a point $v\in S$ is a \emph{center-capturing vertex} (CCV) if 
$\Gamma_{\text{in}}(v)\subseteq \Gamma_{\text{out}}(v)$.
In other words, for all $u\in S$, $d(u,v)\leq r^*$ implies $d(v,u)\leq r^*$.
\end{definition}

\begin{figure}
    \centering
    \begin{subfigure}[b]{0.35\textwidth}
        \includegraphics[width=\textwidth]{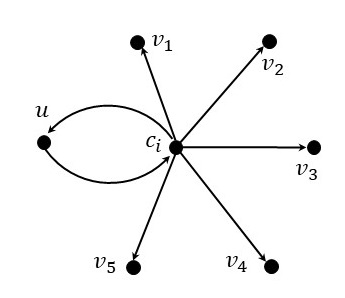}
        \caption{\small $u$ is a CCV, so it is distance $2r^*$ to its entire cluster.
				}
        \label{fig:ccv}
    \end{subfigure}	
		\quad\quad\quad\quad
    \begin{subfigure}[b]{0.35\textwidth}
        \includegraphics[width=\textwidth]{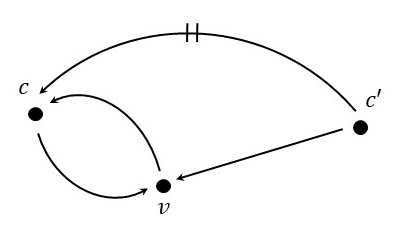}
				\caption{\small $c$ satisfies CCV-proximity, so it is closer to $v$ than $c'$ is to $v$. }
 \label{fig:proximity}
     
    \end{subfigure}
		    \caption{Examples of a center-capturing vertex (left), and CCV-proximity (right).}
		    \label{fig:approx-ccv}	
\end{figure}

As the name suggests, each CCV $v\in C_i$, ``captures'' its center, i.e.\ $c_i\in\Gamma_{\text{out}}(v)$ (see Figure~\ref{fig:ccv}).
Therefore, $v$'s entire cluster is contained inside $\Gamma_{\text{out}}^2(v)$, which is a nice property that the
approximation algorithm exploits.
At a high level, the approximation algorithm has two phases. In the first phase, the algorithm iteratively picks
a CCV $v$ arbitrarily and removes all points in $\Gamma_{\text{out}}^2(v)$. 
This continues until there are no more CCVs.
For every CCV picked, the algorithm is guaranteed to remove an entire optimal cluster.
In the second phase, the algorithm runs $\log^* n$ rounds of
a greedy set-cover subroutine on the remaining points. 
See Algorithm \ref{alg:asy}.
To prove the second phase terminates in $O(\log^* n)$ rounds, the analysis crucially assumes there are no
CCVs among the remaining points. We refer the reader to~\citep{vishwanathan} for these details.

\begin{algorithm}[h]
\caption {\textsc{$O(\log^* n)$ approximation algorithm for asymmetric $k$-center~\citep{vishwanathan}}}
\label{alg:asy}
\begin{algorithmic}
\Require{Asymmetric $k$-center instance $(S, d)$, optimal radius $r^*$ (or try all possible candidates)}
\Statex Set $C=\emptyset$
\Statex\textbf{Phase I: Pull out arbitrary CCVs}
\State While there exists an unmarked CCV
\begin{itemize}
\item Pick an unmarked CCV $c$, add $c$ to $C$, and mark all vertices in $\Gamma_{\text{out}}^{2}(c)$
\end{itemize}
\textbf{Phase II: Recursive set cover}
\State Set $A_0=S\setminus\Gamma_{\text{out}}^{5}(C)$, $i=0$. 
\State While $|A_i|>k$:
\begin{itemize}
\item Set $A'_{i+1}=\emptyset$. 
\item While there exists an unmarked point in $A_i$:
\begin{itemize}
\item Pick $v\in S$ which maximizes $\Gamma_{\text{out}}^{5}(v)\cap A_i$, mark points in $\Gamma_{\text{out}}^{5}(v)\cap A_i$, and add $v$ to $A'_{i+1}$. 
\end{itemize}
\item Set $A_{i+1}=A'_{i+1}\cap A_0$ and $i=i+1$
\end{itemize}
\Ensure{Centers $C\cup A_{i+1}$}
\end{algorithmic}
\end{algorithm}

\paragraph{Description of our algorithm and analysis}
We show a modification to the approximation algorithm of \cite{vishwanathan} leads to simultaneous guarantees
in the worst case and under local perturbation resilience.
Note that the set of all CCV's is identical to the symmetric set $A$ defined in Section \ref{sec:2pr-asymmetric}.
In Section \ref{sec:2pr-asymmetric}, we showed that all centers are in $A$, therefore, all centers are CCV's, assuming 2-PR.
In this section, we have that each 2-MPR center is a CCV (Lemma~\ref{lem:akc-properties}),
which is true by definition of $r^*$, ($C_i\subseteq\Gamma_{\text{out}}(c_i)$) and by using the definition of 2-MPR 
($\Gamma_{\text{in}}(c_i)\subseteq C_i$).
Since each 2-MPR center is a CCV, we might hope that we can output the 2-MPR clusters by
iteratively choosing a CCV $v$ and removing all points in $\Gamma_{\text{out}}^2(v)$. 
However, using this approach we might remove two or more 2-MPR centers in the same iteration, which means we would not output one separate
cluster for each 2-MPR cluster.
If we try to get around this problem by iteratively choosing a CCV $v$ and removing all points in $\Gamma_{\text{out}}^1(v)$,
then we may not remove one full cluster in each iteration, so for example, some of the 2-MPR clusters may be cut in half.

The key challenge is thus carefully specifying which nearby
points get marked by each CCV $c$ chosen by the algorithm.
We fix this problem with two modifications that carefully balance the two guarantees.
First, any CCV $c$ chosen will mark points in the following way: for all $c'\in \Gamma_{\text{in}}(c)$, mark all points in 
$\Gamma_{\text{out}}(c')$. Intuitively, we still mark points that are two hops from $c$, but the first `hop' must go backwards,
i.e., mark $v$ such that there exists $c'$ and $d(c',c)\leq r^*$ and $d(c',v)\leq r^*$.
This gives us a useful property:
if the algorithm picks a CCV $c\in C_i$ and it marks a different 2-MPR center $c_j$, then the middle hop
must be a point $q$ in $C_j$.
However, we know from perturbation resilience that $d(c_j,q)<d(c,q)$.
This fact motivates the final modification to the algorithm.
Instead of picking arbitrary CCVs,
we require the algorithm to choose CCVs with an extra structural property which we call \emph{CCV-proximity} (Definition~\ref{def:prox}).
See Figure \ref{fig:proximity}.
Intuitively, a point $c$ satisfying CCV-proximity must be closer than other CCVs to each point in $\Gamma_{\text{in}}(c)$.
Going back to our previous example, $c$ will NOT satisfy CCV-proximity because $c_j$ is closer to $q$, but we will be able to show that
all 2-MPR centers do satisfy CCV-proximity.
Thus Algorithm~\ref{alg:asy-pr} works as follows. It first chooses points satisfying CCV-proximity and marks points according to the rule
mentioned earlier. When there are no more points satisfying CCV-proximity, the algorithm chooses regular CCVs. 
Finally, it runs Phase II as in Algorithm~\ref{alg:asy}.
This ensures that Algorithm~\ref{alg:asy-pr} will output each 2-MPR center in its own cluster.

\paragraph{Details for Theorem~\ref{thm:asy_local}}
Now we formally define CCV-proximity.
The other properties in the following definition, \emph{center-separation}, and \emph{weak CCV-proximity},
are defined in terms of the optimal clustering, so they cannot be explicitly used by an algorithm,
but they will simplify all of our proofs.

\begin{definition}
\begin{enumerate}
\item An optimal center $c_i$ satisfies \emph{center-separation} if 
any point within distance $r^*$ of $c_i$ belongs to its cluster $C_i$. 
That is, $\Gamma_{\text{in}}(c_i)\subseteq C_i$ (see Figure~\ref{fig:center-sep}).
\footnote{
Center-separation is the local-PR equivalent of property 2 from Section \ref{sec:2pr-asymmetric}.
}
\item
A CCV $c\in C_i$ satisfies \emph{weak CCV-proximity} if,
given a CCV $c'\notin C_i$ and a point $v\in C_i$, we have $d(c,v)<d(c',v)$ (see Figure~\ref{fig:cluster-prox}).
\footnote{
This is a variant of $\alpha$-center proximity~\citep{awasthi2012center},
a property defined over an entire clustering instance, 
which states for all $i$, for all $v\in C_i$, $j\neq i$, we have $\alpha d(c_i,v)<d(c_j,v)$.
Our variant generalizes to local-PR, asymmetric instances, and general CCV's.
}
\item A point $c$ satisfies \emph{CCV-proximity} if it is a CCV, and each point in $\Gamma_{\text{in}}(c)$ 
is closer to $c$ than any CCV outside of $\Gamma_{\text{out}}(c)$. 
That is, for all points $v\in \Gamma_{\text{in}}(c)$ and CCVs $c'\notin \Gamma_{\text{out}}(c)$, 
$d(c,v)<d(c',v)$ (see Figure~\ref{fig:proximity})
\footnote{This is similar to a property in \citep{balcan2012clustering}.}
\end{enumerate}
 \label{def:prox}
\end{definition}

\begin{figure}
    \centering
    \begin{subfigure}[b]{0.40\textwidth}
        \includegraphics[width=\textwidth]{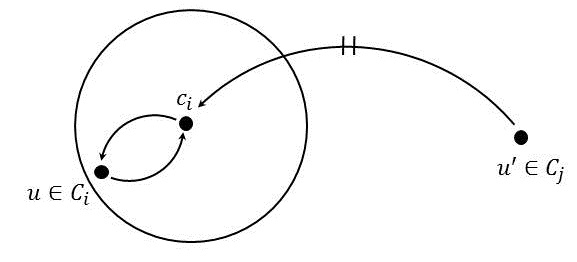}
        \caption{\small $c_i$ satisfies center-separation, so $\Gamma_{\text{in}}(c_i)\subseteq C_i$.
				}
        \label{fig:center-sep}
    \end{subfigure}	
		\quad\quad\quad\quad
    \begin{subfigure}[b]{0.40\textwidth}
        \includegraphics[width=\textwidth]{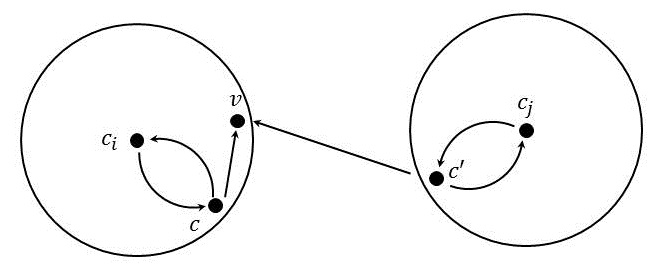}
				\caption{\small $c\in C_i$ satisfies weak CCV-proximity, so it is closer to $v\in C_i$ than $c'\in C_j$ is to $v$. }
 \label{fig:cluster-prox}
     
    \end{subfigure}
		    \caption{Examples of center-separation (left),
				and weak CCV-proximity (right).}
		    \label{fig:sep-prox}	
\end{figure}

Next we prove that all 2-MPR centers satisfy center-separation,
and all CCV's from a 2-MPR cluster satisfy CCV-proximity and weak CCV-proximity.

\begin{lemma}\label{lem:akc-properties}
Given an asymmetric $k$-center clustering instance $(S,d)$ and a 2-MPR cluster $C_i$, \\
\emph{(1)} $c_i$ satisfies center-separation, \\
\emph{(2)} any CCV $c\in C_i$ satisfies CCV-proximity, \\
\emph{(3)} any CCV $c\in C_i$ satisfies weak CCV-proximity.
\end{lemma}

\begin{proof} 

Given an instance $(S,d)$ and a 2-MPR cluster $C_i$,
we show that $C_i$ has the desired properties.

\emph{Center separation:}
Assume there exists a point $v\in C_j$ for $j\neq i$ such that $d(v,c_i)\leq r^*$.
The idea is to construct a $2$-perturbation in which $v$ becomes the center for $C_i$.

\begin{equation*}
d''(s,t)=
\begin{cases}
\min( 2 r^*,2 d(s,t) ) & \text{if } s=v \text{, } t\in C_i  \\
2 d(s,t) & \text{otherwise.}
\end{cases}
\end{equation*}
$d''$ is a valid 2-perturbation of $d$ because for each point $u\in C_i$, 
$d(v,u)\leq d(v,c_i)+d(c_i,u)\leq 2r^*$.
Define $d'$ as the metric completion of $d''$.
Then by Lemma \ref{lem:d'_metric}, $d'$ is a 2-metric perturbation with
optimal cost $2 r^*$.
The set of centers $\{c_{i'}\}_{i'=1}^k\setminus\{c_i\}\cup\{v\}$ achieves the optimal
cost, since $v$ is distance $2 r^*$ from $C_i$, and all other 
clusters have the same center as in $\mathcal{OPT}$ (achieving radius $2 r^*$).
If $v$ is a noncenter, then $\{c_{i'}\}_{i'=1}^k\setminus\{c_i\}\cup\{v\}$ is a valid set of $k$ centers.
If $v=c_j$, then add an arbitrary point $v'\in C_j$ to this set of centers (it still achieves the optimal cost since adding another center
can only decrease the cost).
Then in this new optimal clustering, $c_i$'s center is a point in $\{c_{i'}\}_{i'=1}^k\setminus\{c_i\}\cup\{v,v'\}$,
none of which are from $C_i$. We conclude that $C_i$ is no longer an optimal cluster, contradicting 2-MPR.

\emph{Weak CCV-proximity:}
Given a CCV $c\in C_i$, a CCV $c'\in C_j$ such that $j\neq i$, and a point $v\in C_i$,
assume to the contrary that $d(c',v)\leq d(c,v)$.
We will construct a perturbation in which $c$ and $c'$ become centers of their respective
clusters, and then $v$ switches clusters.
Define the following perturbation $d''$.

\begin{equation*}
d''(s,t)=
\begin{cases}
\min( 2 r^*, 2 d(s,t) ) & \text{if } s=c \text{, } t\in C_i\text{ or }s=c'\text{, }t\in
C_j\cup\{v\} \\
2 d(s,t) & \text{otherwise.}
\end{cases}
\end{equation*}
$d''$ is a valid 2-perturbation of $d$ because for each point $u\in C_i$,
$d(c,u)\leq d(c,c_i)+d(c_i,u)\leq 2r^*$, for each point $u\in C_j$,
$d(c',u)\leq d(c',c_j)+d(c_j,u)\leq 2r^*$, and $d(c',v)\leq d(c,v)\leq d(c,c_i)+d(c_i,v)\leq 2r^*$.
Define $d'$ as the metric completion of $d''$.
Then by Lemma \ref{lem:d'_metric}, $d'$ is a 2-metric perturbation with
optimal cost $2 r^*$.
The set of centers $\{c_{i'}\}_{i'=1}^k\setminus\{c_i,c_j\}\cup\{c,c'\}$ achieves the optimal
cost, since $c$ and $c'$ are distance $2r^*$ from $C_i$ and $C_j$, and all other 
clusters have the same center as in $\mathcal{OPT}$ (achieving radius $2 r^*$).
Then since $d'(c',v)\leq d(c,v)$, $v$ can switch clusters, contradicting perturbation resilience.

\emph{CCV-proximity:}
First we show that $c_i$ is a CCV.
By center-separation, we have that $\Gamma_{\text{in}}(c_i)\subseteq C_i$, and by definition of $r^*$,
we have that $C_i\subseteq\Gamma_{\text{out}}(c_i)$. Therefore, $\Gamma_{\text{in}}(c_i)\subseteq C_i\subseteq\Gamma_{\text{out}}(c_i)$,
so $c_i$ is a CCV.
Now given a point $v\in \Gamma_{\text{in}}(c_i)$ and a CCV $c\notin \Gamma_{\text{out}}(c_i)$,
from center-separation and definition of $r^*$, we have
$v\in C_i$ and $c\in C_j$ for $j\neq i$.
Then from weak CCV-proximity, 
$d(c_i,v)<d(c,v)$.
\end{proof}

\begin{algorithm}[h]
\caption {\textsc{Algorithm for asymmetric $k$-center under perturbation resilience}}
\label{alg:asy-pr}
\begin{algorithmic}
\Require{ Asymmetric $k$-center instance $(S, d)$, distance $r^*$ (or try all possible candidates)}
\State Set $C=\emptyset$. \\
\textbf{Phase I: Pull out special CCVs}
\begin{itemize}
\item While there exists an unmarked CCV:
\begin{itemize}
\item Pick an unmarked point $c$ which satisfies CCV-proximity.
If no such $c$ exists, then pick an arbitrary unmarked CCV instead. Add $c$ to $C$, and $\forall c'\in \Gamma_{\text{in}}(c)$, mark $\Gamma_{\text{out}}(c')$.
\end{itemize}
\item For each $c\in C$, let $V_c$ denote $c$'s Voronoi tile of the marked points
induced by $C$.
\end{itemize}
\textbf{Phase II: Recursive set cover}
\begin{itemize}
\item Run Phase II as in Algorithm~\ref{alg:asy}, outputting $A_{i+1}$.
\item Compute the Voronoi diagram $\{V_c'\}_{c\in C\cup A_{i+1}}$ of $S\setminus\Gamma^5_{\text{out}}(C)$ induced by $C\cup A_{i+1}$
\item For each $c$ in $C$, set $V'_c=V_c\cup V'_c$
\end{itemize}
\Ensure{Sets $\{V_c'\}_{c\in C\cup A_{i+1}}$}
\footnote{The final outputted clustering is not a Voronoi partition. See the proof of Theorem~\ref{thm:asy_local}.}
\end{algorithmic}
\end{algorithm}

Now using Lemma~\ref{lem:akc-properties}, we can prove Theorem~\ref{thm:asy_local}.

\begin{proof}[Proof of Theorem~\ref{thm:asy_local}]

First we explain why Algorithm \ref{alg:asy-pr} retains the approximation guarantee of Algorithm \ref{alg:asy}.
Given any CCV $c\in C_i$ chosen in Phase I, since $c$ is a CCV, then $c_i\in\Gamma_{\text{out}}(c)$, and by definition
of $r^*$, $C_i\subseteq\Gamma_{\text{out}}(c_i)$. Therefore, each chosen CCV always marks its cluster, and we start
Phase II with no remaining CCVs. 
This condition is sufficient for Phase II to return an
$O(\log^* n)$ approximation (Theorem 3.1 from~\cite{vishwanathan}).

Next we claim that for each 2-MPR cluster $C_i$, there exists a cluster outputted by Algorithm~\ref{alg:asy-pr} 
that is a superset of $C_i$ and does not contain any other 2-MPR cluster.
To prove this claim, we first show there exists a point from $C_i$ satisfying CCV-proximity that cannot be marked by any point from
a different cluster in Phase I.
From Lemma~\ref{lem:akc-properties}, $c_i$ satisfies CCV-proximity and
center-separation. If a point $c\notin C_i$ marks $c_i$, then
$\exists v\in\Gamma_{\text{in}}(c)\cap\Gamma_{\text{in}}(c_i)$.
By center-separation, $c_i\notin\Gamma_{\text{out}}(c)$, and therefore since $c$ is a CCV, $c\notin\Gamma_{\text{out}}(c_i)$.
But then from the definition of CCV-proximity for $c_i$ and $c$,
we have $d(c,v)<d(c_i,v)$ and $d(c_i,v)<d(c,v)$, so we have reached a contradiction (see Figure \ref{fig:marking}).

\begin{figure}
    \centering
    \begin{subfigure}[b]{0.35\textwidth}
        \includegraphics[width=\textwidth]{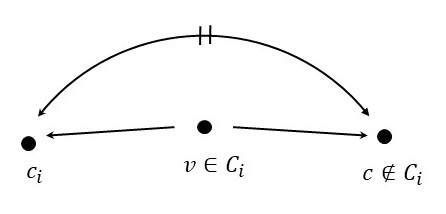}
        \caption{\small A point $c\notin C_i$ cannot mark $c_i$ without causing a contradiciton. }
        \label{fig:marking}
    \end{subfigure}	
		\quad\quad\quad\quad
    \begin{subfigure}[b]{0.45\textwidth}
        \includegraphics[width=\textwidth]{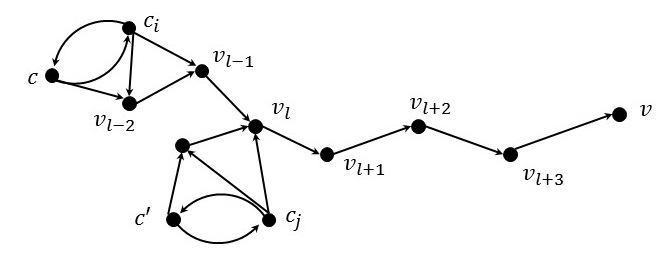}
				\caption{\small Case 2 in the proof of Theorem \ref{thm:strong}: $c'$ is closer to $v$ than $c$ is to $v$. }
 \label{fig:strong}
     
    \end{subfigure}
		    \caption{Example of Algorithm \ref{alg:asy-pr} (left), and the proof of Theorem \ref{thm:strong} (right).}
		    \label{fig:mark-strong}	
\end{figure}

At this point, we know a point $c\in C_i$ will always be chosen by the algorithm in Phase I.
To finish the proof, we show that each point $v$ from $C_i$ is closer to $c$ than to any other point $c'\notin C_i$ chosen as a center in Phase I.
Since $c$ and $c'$ are both CCVs, this follows directly from weak CCV-proximity.
\footnote{
It is possible that a center $c'$ chosen in Phase 2 may be closer to $v$ than $c$ is to $v$, causing $c'$ to ``steal'' $v$; this is unavoidable.
This is why Algorithm \ref{alg:asy-pr} separately computes the voronoi tiling from Phase I and Phase II,
and so the final output is technically not a valid voronoi tiling over the entire instance $S$.}
\end{proof}

\subsubsection{Strong local perturbation resilience}

Theorem \ref{thm:asy_local} shows that Algorithm~\ref{alg:asy-pr} will
output each 2-PR center in its own cluster.
Given some 2-PR center $c_i$, it is unavoidable that $c_i$ might mark
a \emph{non} 2-PR center $c_j$, and capture all points in its cluster.
In this section, we show that Algorithm \ref{alg:asy-pr} with a slight modification outputs each 
2-strong local perturbation resilient cluster exactly.
Recall that intuitively, an optimal cluster $C_i$ satisfies $\alpha$-strong local perturbation resilience if all nearby clusters satisfy
$\alpha$-perturbation resilience (definition~\ref{def:spr}).

\begin{figure}
    \centering
        \includegraphics[width=.5\textwidth]{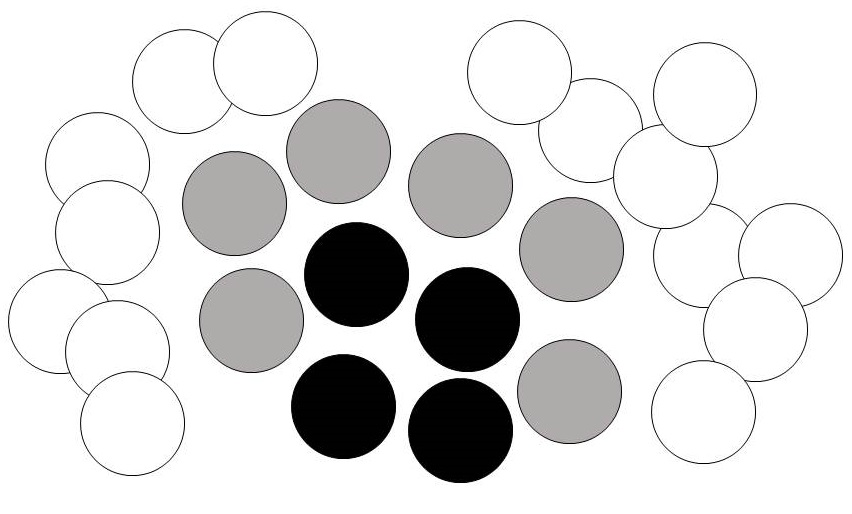}
        \caption{\small The white clusters are optimal clusters with no structure, the gray clusters are 2-PR clusters,
				and the black clusters are 2-PR clusters who only have neighbors which are also 2-PR (Theorem \ref{thm:strong}).
				Algorithm \ref{alg:asy-strong} outputs the black clusters exactly.}
        \label{fig:strong-example}
\end{figure}

Intuitively, the nearby 2-PR clusters `shield' $C_i$ from all other points (see Figure \ref{fig:strong-example}).
The only modification is that at the end of Phase II, instead of calculating the Voronoi diagram using the metric $d$,
we assign each point $v\in S\setminus\Gamma^5_{\text{out}}(C)$ to the point in $C\cup A_{i+1}$
which minimizes the path length in $D_{(S,d)}$,
breaking ties by distance to first common vertex in the shortest path.

\begin{algorithm}[h]
\caption {\textsc{Outputting optimal clusters for asymmetric $k$-center under stability}}
\label{alg:asy-strong}
\begin{algorithmic}
\Require{ Asymmetric $k$-center instance $(S, d)$, distance $r^*$ (or try all possible candidates)}
\State Set $C=\emptyset$. \\
\textbf{Phase I: Pull out special CCVs}
\begin{itemize}
\item While there exists an unmarked CCV:
\begin{itemize}
\item Pick an unmarked point $c$ which satisfies CCV-proximity.
If no such $c$ exists, then pick an arbitrary unmarked CCV instead. Add $c$ to $C$, and $\forall c'\in \Gamma_{\text{in}}(c)$, mark $\Gamma_{\text{out}}(c')$.
\end{itemize}
\item For each $c\in C$, let $V_c$ denote $c$'s Voronoi tile of the marked points
induced by $C$.
\end{itemize}
\textbf{Phase II: Recursive set cover}
\begin{itemize}
\item Run Phase II as in Algorithm~\ref{alg:asy}, outputting $A_{i+1}$.
\end{itemize}
\textbf{Phase III: Assign points to centers}
\begin{itemize}
\item For each $v\in S\setminus\Gamma^5_{\text{out}}(C)$, assign $v$ to the center $c\in C\cup A_{i+1}$
with the minimum path length in $D_{(S,d)}$ from $c$ to $v$, breaking ties by distance to first common vertex in the shortest path.
\item Let $V_c'$ denote the set of vertices in $v\in S\setminus\Gamma^5_{\text{out}}(C)$ assigned to $c$.
\item For each $c$ in $C$, set $V'_c=V_c\cup V'_c$
\end{itemize}
\Ensure{Sets $\{V_c'\}_{c\in C\cup A_{i+1}}$}
\end{algorithmic}
\end{algorithm}

\begin{theorem} \label{thm:strong}
Given an asymmetric $k$-center clustering instance $(S,d)$ with optimal clustering $\mathcal{C}=\{C_1,\dots,C_k\}$,
consider a 2-PR cluster $C_i$.
Assume that for all $C_j$ for which there is $v\in C_j$, $u\in C_i$, and $d(u,v)\leq r^*$, we have that $C_j$ is also 2-PR ($C_i$ satisfies 2-strong local perturbation resilience).
Then Algorithm~\ref{alg:asy-strong} returns $C_i$ exactly.
\end{theorem}

\begin{proof} 
Given a 2-PR cluster $C_i$ with the property in the theorem statement, by Theorem \ref{thm:asy_local},
there exists a CCV $c\in C_i$ from Phase I satisfying CCV-proximity such that $C_i\subseteq V_c$.
Our goal is to show that $V_c=C_i$.
First we show that $\Gamma_{\text{in}}(c)\subseteq C_i$, which will help us prove the theorem.
Assume towards contradiction that there exists a point $v\in\Gamma_{\text{in}}(c)\setminus C_i$.
Let $v\in C_j$. Since $c$ is a CCV, we have $v\in\Gamma_{\text{out}}(c)$, so $C_j$ must be 2-PR by definition.
By Lemma \ref{lem:akc-properties}, 
$c_j$ is a CCV and $d(c_j,v)<d(c,v)$. 
But this violates CCV-proximity of $c$, so we have reached a contradiction. 
Therefore, $\Gamma_{\text{in}}(c)\subseteq C_i$. 

To finish the proof, we must show that $V_c\subseteq C_i$.
Assume towards contradiction there exists $v\in V_c\setminus C_i$ at the end of the algorithm.

Case 1: $v$ was marked by $c$ in Phase I. Let $v\in C_j$.
Then there exists a point $u\in \Gamma_{\text{in}}(c)$ such that $v\in\Gamma_{\text{out}}(u)$.
From the previous paragraph we have that $\Gamma_{\text{in}}(c)\subseteq C_i$, so $u\in C_i$.
Therefore, $v\in\Gamma_{\text{out}}(u)$ implies $C_j$ must be 2-PR.
Since $v$ is from a different 2-PR cluster, it cannot be contained in $V_c$, so we have reached a contradiction.

Case 2: $v$ was not marked by $c$ in Phase I.
Denote the shortest path in $D_{(S,d)}$ from $c$ to $v$ by $(c=v_0,v_1,\dots,v_{L-1},v_L=v)$.
Let $v_\ell\in C_j$ denote the first vertex on the shortest path that is not in $C_i$ (such a vertex must exist because $v\notin C_i$).
Then $v_{\ell-1}\in C_i$ and $d(v_{\ell-1},v_\ell)\leq r^*$, so $C_j$ is 2-PR by the assumption in Theorem \ref{thm:strong}.
Let $c'$ denote the CCV chosen in Phase I such that $C_j\subseteq V_{c'}$.
Then by weak CCV-proximity from Lemma \ref{lem:akc-properties}, we have $d(c',v_\ell)<d(c,v_\ell)$.

Case 2a: $d(c,v_\ell)\leq r^*$. Then $v_\ell$ is the first vertex on the shortest path from $c$ to $v$ and $c'$ to $v$,
so $v_\ell$ is the first common vertex on the shortest paths. Since $d(c',v_\ell)<d(c,v_\ell)$, the algorithm will choose
$c'$ as the center for $v$.
See Figure~\ref{fig:strong}.

Case 2b: $d(c,v_\ell)> 2r^*$. But since $c'\in C_j$ is a CCV, we have $d(c',v_\ell)\leq d(c',c_j)+d(c_j,v_\ell)\leq 2r^*$, so the shortest path from
$c'$ to $v_\ell$ on $D_{(S,d)}$ is at most 2, and the shortest path from $c$ to $v_\ell$ on $D_{(S,d)}$ is at least 3. Since $v_\ell$ is on the shortest
path from $c$ to $v$, it follows that the shortest path from $c'$ to $v$ is strictly shorter than the shortest path from $c$ to $v$.

Case 2c: $r^*<d(c,v_\ell)\leq 2r^*$. In this case, we will show that $d(c',v_\ell)\leq r^*$,
and therefore we conclude the shortest path from $c'$ to $v$ is strictly shorter than the shortest path from $c$ to $v$, as in Case 2b.
Assume towards contradiction that $d(c',v_\ell)>r^*$.
Then we will create a 2-perturbation in which $c$ and $c'$ become centers for their own clusters, and $v_\ell$ switches clusters.
Define the following perturbation $d'$.

\begin{equation*}
d'(s,t)=
\begin{cases}
\min( 2 r^*, 2 d(s,t) ) & \text{if } s=c \text{, } t\in C_i\text{ or }s=c'\text{, }t\in C_j\setminus\{v_\ell\} \\
d(s,t) & \text{if } s=c \text{, }t=v_\ell \\
2 d(s,t) & \text{otherwise.}
\end{cases}
\end{equation*}
$d'$ is a valid 2-perturbation of $d$ because for each point $u\in C_i$, $d(c,u)\leq d(c,c_i)+d(c_i,u)\leq 2r^*$, 
for each point $u\in C_j$, $d(c',u)\leq d(c',c_j)+d(c_j,u)\leq 2r^*$, 
and $d(c,v_\ell)\leq 2r^*$. Therefore, $d'$ does not decrease any distances (and by construction, $d'$ does not increase
any distance by more than a factor of 2). 
If the optimal cost is $2r^*$, then the set of centers $\{c_{i'}\}_{i'=1}^k\setminus\{c_i,c_j\}\cup\{c,c'\}$ achieves the optimal
cost, since $c$ and $c'$ are distance $2r^*$ from $C_i\cup \{v_\ell\}$ and $C_j$, and all other 
clusters have the same center as in $\mathcal{OPT}$ (achieving radius $2 r^*$).
Then by perturbation resilience, it must be the case that $d'(c',v_\ell)<d'(c,v_\ell)$, which implies $2d(c',v_\ell)<d(c,v_\ell)$.
But $d(c',v_\ell)>r^*$ and $d(c,v_\ell)\leq 2r^*$, so we have a contradiction.
Now we assume the optimal cost of $d'$ is less than $2r^*$. 
Note that all distances $d(s,t)$ were increased to $2d(s,t)$ or $\min(2d(s,t),2r^*)$ except for $d(c,v_\ell)$.
Therefore, $c$ must be a center for $v_\ell$ under $d'$, or else the optimal cost could would be exactly $2r^*$ by Lemma \ref{lem:d'_metric}.
But it contradicts perturbation resilience to have $c$ and $c_\ell$ in the same optimal cluster under a 2-perturbation.
This completes the proof.
\end{proof}

\section{$k$-center under $(\alpha,\epsilon)$-perturbation resilience} \label{sec:3eps}

In this section, we consider $(\alpha,\epsilon)$-perturbation resilience.
First, we show that any 2-approximation algorithm for symmetric $k$-center must be optimal under  $(3,\epsilon)$-perturbation resilience (Theorem \ref{thm:3epsThm}).
Next, we show how to extend this result to local perturbation resilience (Theorem \ref{thm:3epsthm}).
Then we give an algorithm for asymmetric $k$-center which returns a clustering that is $\epsilon$-close to $\OPT$ under $(3,\epsilon)$-perturbation resilience (Theorem \ref{thm:3eps_asy}).
For all of these results, we assume a lower bound on the size of the optimal clusters, $|C_i|>2\epsilon n$ for all $i\in [k]$.
We show the lower bound on cluster sizes is necessary; in its absence,
the problem becomes $NP$-hard for all values of $\alpha\geq 1$ and $\epsilon>0$ (Theorem \ref{thm:apx}).
The theorems in this section require a careful reasoning about sets of centers under
different perturbations that cannot all simultaneously be optimal.

\subsection{Symmetric $k$-center}

We show that for any $(3,\epsilon)$-perturbation resilient $k$-center instance such that $|C_i|>2\epsilon n$ for all $i\in [k]$, 
there cannot be any pair of points from different clusters which are distance $\leq r^*$.
This structural result implies that simple algorithms will return the optimal clustering, such as running any 2-approximation algorithm
or running the Single Linkage algorithm, which is a fast algorithm widely used in practice for its simplicity.

\begin{theorem} \label{thm:3epsThm}
Given a $(3,\epsilon)$-perturbation resilient symmetric
$k$-center instance $(S,d)$ where all optimal clusters are size
$>\max(2\epsilon n,3)$, then the optimal clusters in $\OPT$ are exactly the
connected components of the threshold graph $G_{r^*}=(S,E)$,
where $E=\{(u,v)\mid d(u,v)\leq r^*\}$.
\end{theorem}

First we explain the high-level idea behind the proof.

\noindent{\bf Proof idea.}
Since each optimal cluster center is distance $r^*$ from all points in its cluster,
it suffices to show that any two points in different clusters are greater than $r^*$ apart
from each other.
Assume on the contrary that there exist $p\in C_i$ and $q\in C_{j\neq i}$
such that $d(p,q)\leq r^*$.
First we find a set of $k+2$ points and a 3-perturbation $d'$, such that every size $k$
subset of the points are optimal centers under $d'$. Then we show how this leads to a
contradiction under $(3,\epsilon)$-perturbation resilience.

Here is how we find a set of $k+2$ points and a perturbation $d'$ such that all size $k$
subsets are optimal centers under $d'$.
From our assumption, $p$ is distance $\leq 3r^*$ from every point in $C_i\cup C_j$
(by the triangle inequality).
Under a 3-perturbation in which all distances are blown up by a factor of 3
except $d(p,C_i\cup C_j)$, then replacing 
$c_i$ and $c_j$ with $p$ would still give us a set of $k-1$ centers that achieve the
optimal cost.
But, \emph{would this contradict $(3,\epsilon)$-perturbation resilience?} Not necessarily!
Perturbation resilience requires exactly $k$ \textit{distinct} centers.
\footnote{
This distinction is well-motivated; if for some application, the best $k$-center
solution is to put two centers at the same location, then we could achieve the
exact same solution with $k-1$ centers. That implies we should have been running
$k'$-center for $k'=k-1$ instead of $k$.}
The key challenge is to pick a final ``dummy'' center to guarantee
that the Voronoi partition is $\epsilon$-far from $\OPT$.
The dummy center might ``accidentally'' be the closest center for almost all points in $C_i$ or $C_j$.
Even worse, it might be the case that the new center sets off a chain reaction in which it becomes center to
a cluster $C_x$, and $c_x$ becomes center to $C_j$, which would also result in a partition that is not 
$\epsilon$-far from $\OPT$.

To deal with the chain reactions, 
we crucially introduce the notion of a
\emph{cluster capturing center} (CCC). 
A cluster capturing center (CCC) is not to be confused with a center-capturing vertex (CCV),
defined by \cite{vishwanathan} and used in the previous section.
$c_x$ is a CCC for $C_y$, if for all but $\epsilon n$ points $p\in C_y$,
$d(c_x,p)\leq r^*$ and for all $i\neq x,y$, $d(c_x,p)<d(c_i,p)$.
Intuitively, a CCC exists if and only if
$c_x$ is a valid center for $C_y$ when $c_y$ is taken out of the
set of optimal centers (i.e., a chain reaction will occur).
We argue that if a CCC does not exist then every dummy center we pick
must be close to either $C_i$ or $C_j$, since there are no chain reactions. 
If there does exist a CCC $c_x$ for $C_y$, then it is much harder to reason about what
happens to the dummy centers under $d'$, since there may be chain reactions.
However, we can define a new $d''$ by increasing
all distances except $d(c_x,C_y)$, which allows us to
take $c_y$ out of the set of optimal centers, 
and then any dummy center must be close to $C_x$ or $C_y$. There
are no chain reactions because we already know $c_x$ is the best center
for $C_y$ among the original optimal centers.
Thus, whether or not there exists a CCC, 
we can find $k+2$ points close to the entire dataset by 
picking points from both $C_i$ and $C_j$ (resp. $C_x$ and $C_y$).

Because of the assumption that all clusters are size $>2\epsilon n$, for every
3-perturbation there must be a bijection between clusters and centers, where
the center is closest to the majority of points in the corresponding cluster.
We show that all size $k$ subsets of the $k+2$ points cannot simultaneously
admit bijections that are consistent with one another.

\noindent{\bf Formal analysis.}
We start out with a simple implication from the assumption that
$|C_i|>2\epsilon n$ for all $i$.

\begin{fact} \label{fact:half}
Given a clustering instance which is $(\alpha,\epsilon)$-perturbation resilient
for $\alpha\geq 1$, and all optimal clusters have size $>2\epsilon n$,
then for any $\alpha$-perturbation $d'$, for any set of optimal centers
$c'_1,\dots,c'_k$ of $d'$, for each optimal cluster $C_i$, there must be a unique center
$c'_i$ which is the center for more than half of the points in $C_i$ under
$d'$.
\end{fact}

This fact follows simply from the definition of $(\alpha,\epsilon)$-perturbation resilience
(under $d'$, at most $\epsilon n$ points in the optimal solution can change clusters),
and the assumption that all optimal clusters are size $>2\epsilon n$.
Now we formally define a CCC.

\begin{definition} \label{def:CCC}
A center $c_i$ is a \textit{first-order cluster-capturing center} (CCC) 
for $C_j$ if for all $x\neq j$,
for more than half of the points $p\in C_j$, $d(c_i,p)< d(c_x,p)$ and
$d(c_i,p)\leq r^*$ (see Figure \ref{fig:ccc}).
$c_i$ is a \textit{second-order cluster-capturing center} (CCC2) 
for $C_j$ if there exists a $c_l$ such that for all $x\neq j,l$,
for more than half of points $p\in C_j$, $d(c_i,p)< d(c_x,p)$
and $d(c_i,p)\leq r^*$ (see Figure \ref{fig:CCC2}).
\end{definition}

\begin{figure}
\centering
  \hspace*{\fill}%
\begin{subfigure}[b]{0.22\textwidth}
\includegraphics[width =\textwidth]{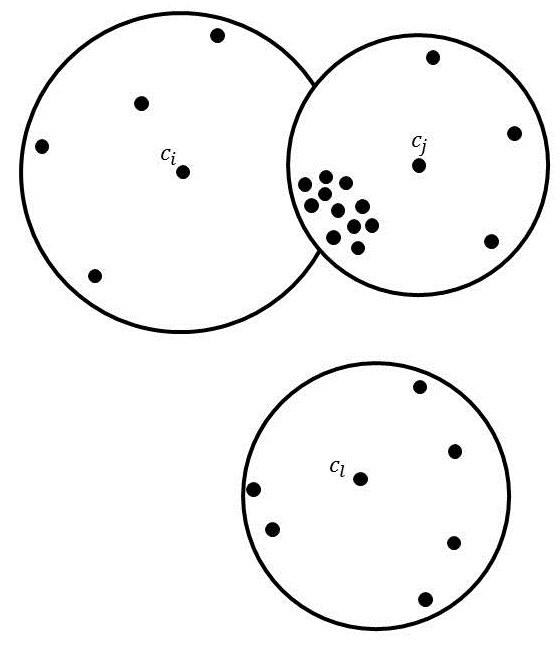}
\caption{\small \sl $c_i$ is a CCC for $C_j$.}
\label{fig:ccc}
\end{subfigure}
~
\hfill
\begin{subfigure}[b]{0.22\textwidth}
\includegraphics[width =\textwidth]{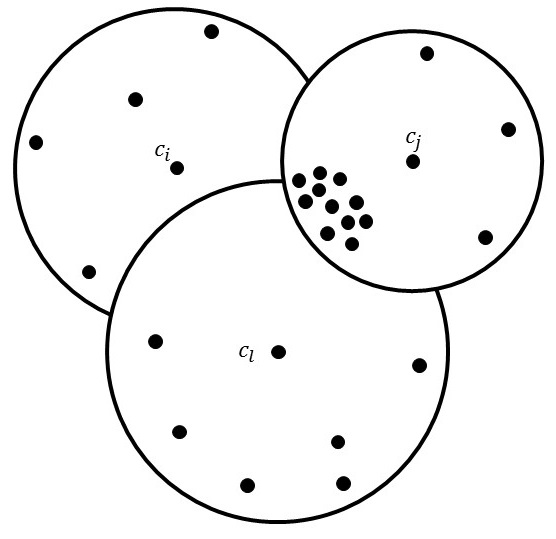}
\caption{\small \sl $c_i$ is a CCC for $C_j$ and $c_x$ is a CCC2 for $C_j$.}
\label{fig:CCC2}
\end{subfigure}
  \hspace*{\fill}%
\caption{(a)  Definition of a CCC, and (b) definition of a CCC2.}
\end{figure}

Each cluster $C_j$ can have at most one CCC $c_i$ because
$c_i$ is closer than any other center to more than half of $C_j$.
Every CCC is a CCC2, since the former is a stronger condition.
However, it is possible for a cluster to have multiple CCC2's.
\footnote{In fact, a cluster can have at most three CCC2's, but we do not use this in our analysis.}
We needed to define CCC2 for the following reason. Assuming there exist
$p\in C_i$ and $q\in C_j$ which are close, and we replace $c_i$ and
$c_j$ with $p$ in the set of centers. 
It is possible that $c_j$ is a CCC for $C_i$, but this does not help us, since we want to analyze the
set of centers after removing $c_j$.
However, if we know that $c_x$ is a CCC2 for $C_i$ (it is the best center for $C_i$, 
disregarding $c_j$),
then we know that $c_x$ will be the best center for $C_i$ after replacing
$c_i$ and $c_j$ with $p$.
Now we use this definition to show that if two points from different clusters are
close, we can find a set of $k+2$ points and a 3-perturbation $d'$, such that every size $k$
subset of the points are optimal centers under $d'$. 
To formalize this notion, we give one more definition.

\begin{definition} \label{def:hit}
A set $C\subseteq S$ $(\beta,\gamma)$-hits $S$ if for all $s\in S$,
there exist $\beta$ points in $C$ at distance $\leq\gamma r^*$ to $s$.
\end{definition}

Note that if a set $C$ of $k+2$ points $(3,3)$-hits $S$,
then any size $k$ subset of $C$ is still $3r^*$ from every point in $S$,
and later we will show that means there exists a perturbation $d'$
such that every size $k$ subset must be an optimal set of centers.

\begin{lemma} \label{lem:closepoints}
Given a clustering instance satisfying $(3,\epsilon)$-perturbation resilience such that
all optimal clusters are size $>2\epsilon n$
and there are two points from different clusters which are $\leq r^*$ apart from each other,
then there exists a set $C\subseteq S$ of size $k+2$ which
$(3,3)$-hits $S$.
\end{lemma}

\begin{proof}
First we prove the lemma assuming that a CCC2 exists, and then we prove the
other case. When a CCC2 exists, we do not need the assumption that two
points from different clusters are close.

Case 1: There exists a CCC2.
If there exists a CCC, then denote $c_x$ as a CCC for $C_y$.
If there does not exist a CCC, then denote $c_x$ as a CCC2 for $C_y$.
We will show that all points are close to either $C_x$ or $C_y$.
$c_x$ is distance $\leq r^*$ to all but
$\epsilon n$ points in $C_y$. Therefore, $d(c_x,c_y)\leq 2r^*$ and so $c_x$ is
distance $\leq 3r^*$ to all points in $C_y$. Consider the following $d'$.

\begin{equation*}
d'(s,t)=
\begin{cases}
\min( 3 r^*, 3 d(s,t) ) & \text{if } s=c_x \text{, } t\in C_y  \\
3 d(s,t) & \text{otherwise.}
\end{cases}
\end{equation*}

This is a $3$-perturbation because $d(c_x,t)\leq 3 r^*$ for all $t\in C_y$.
Then by Lemma~\ref{lem:d'_same_cost}, the optimal cost is $3 r^*$.
Given any $u\in S$,
the set of centers $\{c_l\}_{i=1}^k\setminus\{c_y\}\cup\{u\}$ achieves the optimal
cost, since $c_x$ is distance $3 r^*$ from $C_y$, and all other 
clusters have the same center as in $\mathcal{OPT}$ (achieving radius $3 r^*$). 
Therefore, this set of centers must create a partition that is
$\epsilon$-close to $\mathcal{OPT}$, or else there would be a contradiction.
Then from Fact~\ref{fact:half}, one of the centers in 
$\{c_l\}_{i=1}^k\setminus\{c_y\}\cup\{u\}$ must be the center for
the majority of points in $C_y$ under $d'$.
If this center is $c_\ell$, $\ell\neq x,y$, 
then for the majority of points $v\in C_y$, 
$d(c_\ell,v)\leq r^*$ and $d(c_\ell,v)< d(c_z,v)$ for all $z\neq \ell,y$.
Then by definition, $c_\ell$ is a CCC for $C_y$.
But then $\ell$ must equal $x$, so we have a contradiction.
Note that if some $c_\ell$ has for the majority of $v\in C_y$,
$d(c_\ell,v)\leq d(c_z,v)$ (non-strict inequality)
for all $z\neq \ell,y$, then there is another equally
good partition in which $c_\ell$ is not the center for the majority of points
in $C_y$, so we still obtain a contradiction.
Therefore, either $u$ or $c_x$ must be the center for the majority of points
in $C_y$ under $d'$.

If $c_x$ is the center for the majority of points in $C_y$, then $u$ must be the center
for the majority of points in $C_x$ (it cannot be a different 
center $c_\ell$, since $c_x$ is a better
center for $C_x$ than $c_\ell$ by definition).
Therefore, each $u\in S$ is distance $\leq r^*$ to all but $\epsilon n$ points in either $C_x$ or $C_y$.

Now partition all the non-centers into two sets $S_x$ and $S_y$, such that
$S_x=\{u'\mid \text{for the majority of points }v'\in C_x,\text{ }d(u',v')\leq r^*\}$ and
$S_y=\{u'\mid u'\notin S_x\text{ and for the majority of points }v'\in C_y,\text{ }d(u',v')\leq r^*\}$.

Then given $u',v'\in S_x$,
there exists an $s\in C_x$ such that
$d(u',v')\leq d(u',s)+d(s,v')\leq 2r^*$ 
(since both points are close to more than half of points in $C_x$).
Similarly, any two points $u',v'\in S_y$ are $\leq 2r^*$ apart.
See Figure \ref{fig:ccc-arg}.

Now we will find a set of $k+2$ points that $(3,3)$-hits $S$.
For now, assume that $S_x$ and $S_y$ are both nonempty.
Given an arbitrary pair $p\in S_x$, $q\in S_y$, 
we claim that $\{c_\ell\}_{\ell=1}^k\cup\{p,q\}$ $(3,3)$-hits $S$.
Given a non-center $s\in C_i$ such that $i\neq x$ and $i\neq y$, without loss of generality let $s\in S_x$.
Then $c_i$, $p$, and $c_x$ are all distance $3r^*$ to $s$.
Furthermore, $c_i$, $p$ and $c_x$ are all distance $3r^*$ to $c_i$.
Given a point $s\in C_x$, then $c_x$, $c_y$, and $p$ are distance $3r^*$ to $s$ because $d(c_x,c_y)\leq 2r^*$,
and a similar argument holds for $s\in C_y$.
Therefore, $\{c_\ell\}_{\ell=1}^k\cup\{p,q\}$ $(3,3)$-hits $S$.

If $S_x=\emptyset$ or $S_y=\emptyset$, then we can prove a slightly stronger statement:
for each pair of non-centers $\{p,q\}$, $\{c_\ell\}_{\ell=1}^k\cup\{p,q\}$ $(3,3)$-hits $S$.
Without loss of generality, let $S_y=\emptyset$.
Given a point $s\in C_i$ such that $i\neq x$ and $i\neq y$, then $c_i$, $c_x$, and $p$ are all distance $3r^*$ to $s$.
Given a point $s\in C_x$, then $p$, $c_x$, and $c_y$ are all distance $\leq 3r^*$ to $s$.
Given a point $s\in C_y$, then $p$, $c_x$, and $c_y$ are all distance $\leq 3r^*$ to $s$ because $s,p\in S_x$ implies $d(s,p)\leq 2r^*$.
Thus, we have proven case 1.

\begin{figure}
    \centering
\includegraphics[width =.35\textwidth]{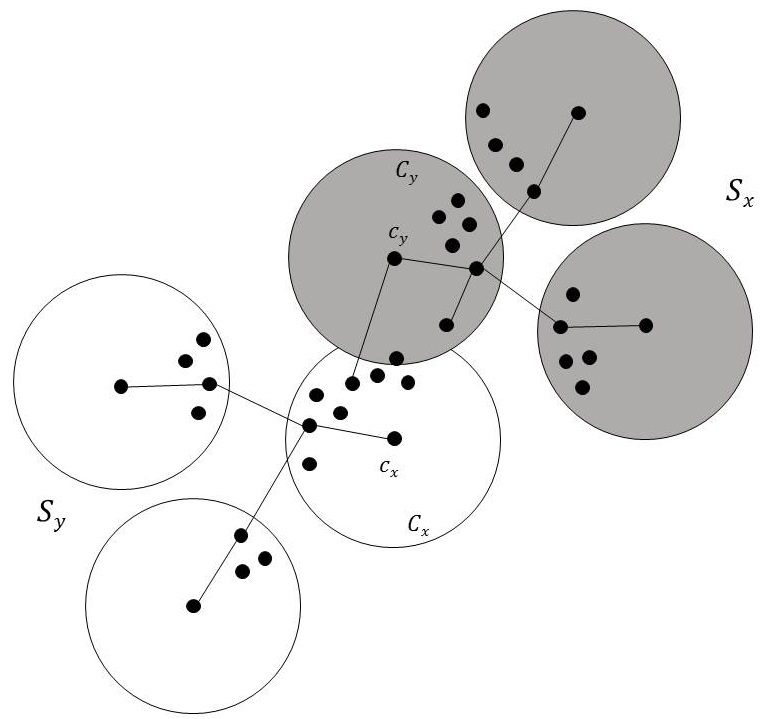}
\caption{\small \sl Case 1 of Lemma \ref{lem:closepoints}}
\label{fig:ccc-arg}
\end{figure}

Case 2: There does not exist a CCC2. Now we use the assumption that there
exist $p\in C_x$, $q\in C_y$, $x\neq y$, such that $d(p,q)\leq r^*$.
Then by the triangle inequality, $p$ is distance $\leq 3r^*$
to all points in $C_x$ and $C_y$.
Consider the following $d'$.

\begin{equation*}
d'(s,t)=
\begin{cases}
\min( 3 r^*, 3 d(s,t) ) & \text{if } s=p \text{, } t\in C_x\cup C_y  \\
3 d(s,t) & \text{otherwise.}
\end{cases}
\end{equation*}

This is a $3$-perturbation because $d(p,t)\leq 3 r^*$ for all $t\in C_x\cup C_y$.
Then by Lemma~\ref{lem:d'_same_cost}, the optimal cost is $3 r^*$.
Given any $s\in S$,
the set of centers $\{c_l\}_{i=1}^k\setminus\{c_x,c_y\}\cup\{p,s\}$ achieves the optimal
cost, since $p$ is distance $3 r^*$ from $C_x\cup C_y$, and all other 
clusters have the same center as in $\mathcal{OPT}$ (achieving radius $3 r^*$). 
Therefore, this set of centers must create a partition that is
$\epsilon$-close to $\mathcal{OPT}$, or else there would be a contradiction.
Then from Fact~\ref{fact:half}, one of the centers in 
$\{c_l\}_{i=1}^k\setminus\{c_x,c_y\}\cup\{p,s\}$ must be the center for
the majority of points in $C_x$ under $d'$.

If this center is $c_\ell$ for $\ell\neq x$ and $\ell\neq y$, 
then for the majority of points $t\in C_x$, 
$d(c_\ell,t)\leq r^*$ and $d(c_\ell,t)< d(c_z,t)$ for all $z\neq \ell,x,y$.
Then by definition, $c_\ell$ is a CCC2 for $C_x$, and we have a contradiction.
Note that if some $c_\ell$ has for the majority of $t\in C_x$, 
$d(c_\ell,t)\leq d(c_z,t)$ (non-strict inequality)
for all $z\neq \ell,y$, then there is another equally
good partition in which $c_\ell$ is not the center for the majority of points
in $C_y$, so we still obtain a contradiction.

Similar logic applies to the center for the majority of points in $C_y$.
Therefore, $p$ and $s$ must be the centers for $C_x$ and $C_y$.
Since $s$ was an arbitrary noncenter, all noncenters are 
distance $\leq r^*$ to all but $\epsilon n$ points in either $C_x$ or $C_y$.

Similar to Case 1, we now partition all the non-centers into two sets $S_x$ and $S_y$, such that
$S_x=\{u\mid \text{for the majority of points }v\in C_x,\text{ }d(u,v)\leq r^*\}$ and
$S_y=\{u\mid u\notin S_x\text{ and for the majority of points }v\in C_y,\text{ }d(u,v)\leq r^*\}$.
As before, each pair of points in $S_x$ are distance $\leq 2r^*$ apart, and similarly for $S_y$.
It is no longer true that $d(c_x,c_y)\leq 2r^*$, however, we can prove that for both $S_x$ and $S_y$,
there exist points from two distinct clusters each. From the previous paragraph, given a non-center
$s\in C_i$ for $i\neq x,y$, we know that $p$ and $s$ are centers for $C_x$ and $C_y$.
With an identical argument, given $t\in C_j$ for $j\neq x,y,i$, we can show that $q$ and $t$ are centers for $C_x$ and $C_y$.
It follows that $S_x$ and $S_y$ both contain points from at least two distinct clusters.

Now we finish the proof by showing that for each pair $u\in S_x$, $v\in S_y$, $\{c_\ell\}_{\ell=1}^k\cup\{u,v\}$ $(3,3)$-hits $S$.
Given a non-center $s\in C_i$, without loss of generality $s\in S_x$, then there exists $j\neq i$ and $t\in C_j\cap S_x$.
Then $c_i$, $c_j$, and $u$ are $3r^*$ to $s$ and $c_i$, $c_x$, and $u$ are $3r^*$ to $c_i$.
In the case where $i=x$, then $c_i$, $c_j$, and $u$ are $3r^*$ to $c_i$.
This concludes the proof.
\end{proof}

So far, we have shown that by just assuming two points from different
clusters are close, we can find a set of $k+2$ points that
$(3,3)$-hits $S$.
Now we will show that such a set leads to a contradiction under 
$(3,\epsilon)$-perturbation resilience.
Specifically, we will show there exists a perturbation $d'$ such that
any size $k$ subset can be an optimal set of centers.
But it is not possible that all ${k+2 \choose k}$ of these sets of centers simultaneously create partitions that are $\epsilon$-close to $\OPT$.
First we state a lemma which proves there does exist a perturbation $d'$
such that any size $k$ subset is an optimal set of centers.

\begin{lemma} \label{lem:hit}
Given a $k$-center clustering instance $(S,d)$, given $z\geq 0$,
and given a set $C\subseteq S$ of size $k+z$ which $(z+1,\alpha)$-hits $S$,
there exists an $\alpha$-metric perturbation $d'$ such that all size $k$ subsets of $C$ are optimal sets of centers under $d'$.
\end{lemma}

\begin{proof}
Consider the following perturbation $d''$.

\begin{equation*}
d''(s,t)=
\begin{cases}
\min( \alpha r^*, \alpha d(s,t) ) & \text{if } s\in C \text{ and } d(s,t)\leq \alpha r^*  \\
\alpha d(s,t) & \text{otherwise.}
\end{cases}
\end{equation*}

By Lemma \ref{lem:d'_metric}, the metric closure $d'$ of $d''$ 
is an $\alpha$-metric perturbation with optimal cost $\alpha r^*$. 
Given any size $k$ subset $C'\subseteq C$, 
then for all $v\in S$, there is still at least one $c\in C'$ such that
$d(c,v)\leq \alpha r^*$, therefore by construction, $d'(c,v)\leq \alpha r^*$.
It follows that $C'$ is a set of optimal centers under $d'$.
\end{proof}

Next, we state a fact that helps clusters rank their best centers from
the set of $k+2$ points.
For each cluster $C_i$, we would like to have a ranking of all points
such that for a given $d'$ and set of $k$ centers, the center for
$C_i$ is the highest point in the ranking.
The following fact shows this ranking is well-defined.

\begin{fact} \label{fact:ranking}
Given a $k$-center clustering instance $(S,d)$
with optimal clustering $\mathcal{C}=\{C_1,\dots,C_k\}$
such that for all $i\in [k]$, $|C_i|>2\epsilon n$,
let $d'$ denote an $\alpha$-perturbation of $d$.
There are rankings $R_{x,d'}$ for all $x\in [k]$
such that for any optimal set of centers $c'_1,\dots,c'_k$ under $d'$, 
the center that is closest in $d'$ to all
but $\epsilon n$ points in $C_x$ is the highest-ranked point in $R_{x,d'}$.
\footnote{Formally, for each $C_x$, there exists a bijection $R_{x,d'}:S\rightarrow[n]$
such that for all sets of $k$ centers $C$ that achieve the optimal cost under $d'$, 
we have $c=\text{argmin}_{c'\in C} R_{x,d'}(c')$ if and only if $\text{Vor}_C(c)$ is $\epsilon$-close to $C_x$.
}
\end{fact}

\begin{proof}
Assume the fact is false. Then there exists a $d'$, a cluster $C_i$, two points $p$ and $q$, and two
sets of $k$ centers $p,q\in C$ and $p,q\in C'$ which achieve the optimal
cost under $d'$, but $p$ is the center for
$C_i$ in $C$ while $q$ is the center for $C_i$ in $C'$.
Then $p$ is closer than all other points in $C$ to all but $\epsilon n$ points in $C_i$.
Similarly, $q$ is closer than all other points in $C'$ to all but $\epsilon n$ points in $C_i$.
Since $|C_i|>2\epsilon n$, this causes a contradiction.
\end{proof}

We also define $R_{x,d',C}:C\rightarrow [n']$ as the ranking specific to a set of centers $C$, where $|C|=n'$.
Now we can prove Theorem \ref{thm:3epsThm}.

\begin{proof} [Proof of Theorem \ref{thm:3epsThm}]
It suffices to prove that any two points from different clusters are
at distance $>r^*$ from each other.
Assume towards contradiction that this is not true.
Then by Lemma~\ref{lem:closepoints},
there exists a set $C$ of size $k+2$ which $(3,3)$-hits $S$.
From Lemma \ref{lem:hit}, there exists a 3-metric perturbation $d'$
such that all size $k$ subsets of $C$ are optimal sets of centers under $d'$.
Consider the ranking of each cluster for $C$ over $d'$ guaranteed
from Fact~\ref{fact:ranking}.
We will show this ranking leads to a contradiction.

Consider the set of all points ranked 1 or 2
by any cluster, formally,
$\{p\in C\mid \exists i\text{ s.t. }R_{i,d',C}\leq 2\}$.
This set is a subset of $C$, since we are only considering the rankings
of points in $C$, so it is size $\leq k+2$.
Note that a point cannot be ranked both 1 and 2 by a cluster.
Then as long as $k>2$, it follows by the Pigeonhole Principle that there exists
a point $c\in C$ which is ranked in the top two by two different clusters.
Formally, there exists $x$ and $y$ such that $x\neq y$, 
$R_{x,d',C}(c)\leq 2$, and $R_{y,d',C}(c)\leq 2$.
Denote $u$ and $v$ such that $R_{x,d',C}(u)=1$ and $R_{y,d',C}(v)=1$.
If $u$ or $v$ is equal to $c$, then redefine it to an arbitrary center
in $C\setminus\{c,u,v\}$.
Consider the set of centers $C'=C\setminus\{u,v\}$ which is optimal under $d'$ by construction.
But then from Fact \ref{fact:ranking}, $c$ is the center for all but $\epsilon n$ 
points in both $C_x$ and $C_y$, contradicting Fact \ref{fact:half}.
This completes the proof.
\end{proof}

\subsection{Local perturbation resilience}

Now we extend the argument from the previous section to local perturbation resilience.
First we state our main structural result, 
which is that any pair of points from different $(3,\epsilon)$-PR clusters must be distance $>r^*$ from each other.
Then we will show how the structural result easily leads to an algorithm for $(3,\epsilon)$-SLPR clusters.

\begin{theorem} \label{thm:3epsthm}
Given a $k$-center clustering instance $(S,d)$ with optimal radius $r^*$ such that all optimal clusters are size $>2\epsilon n$
and there are at least three $(3,\epsilon)$-PR clusters,
then for each pair of $(3,\epsilon)$-PR clusters $C_i$ and $C_j$,
for all $u\in C_i$ and $v\in C_j$, we have $d(u,v)>r^*$. 
\end{theorem}

Before we prove this theorem, we show how it implies an algorithm to output the optimal $(3,\epsilon)$-SLPR clusters exactly.
Since the distance from each point to its closest center is $\leq r^*$, a corollary 
of Theorem~\ref{thm:3epsthm} is that
any 2-approximate solution must contain the optimal $(3,\epsilon)$-SLPR clusters,
as long as the 2-approximation satisfies two sensible conditions:
\emph{(1)} for every point $v$ and its assigned center $u$ (so we know $d(u,v)\leq 2r^*$),
$\exists w$ such that $d(u,w)$ and $d(w,v)$ are $\leq r^*$, and \emph{(2)}
there cannot be multiple clusters outputted in the 2-approximation that can be combined into one cluster with radius smaller than $r^*$.
Both of these properties are easily satisfied using quick pre- or post-processing steps.
\footnote{
For condition \emph{(1)}, before running the algorithm, remove all edges of distance $>r^*$, and then take the metric completion of
the resulting graph.
For condition \emph{(2)}, given the radius $\hat r$ of the
outputted solution, for each $v\in S$, check if the ball of radius $\hat r$ around $v$ captures multiple clusters. If so, combine them.}

\begin{theorem} \label{thm:3eps-full}
Given a $k$-center clustering instance $(S,d)$ such that all optimal clusters are size $>2\epsilon n$
and there are at least three $(3,\epsilon)$-PR clusters,
then any 2-approximate solution satisfying conditions \emph{(1)} and \emph{(2)} must contain all optimal $(3,\epsilon)$-SLPR clusters.
\end{theorem}

\begin{proof}
Given such a clustering instance, then Theorem~\ref{thm:3epsthm} ensures that there is no edge of length $r^*$ between 
points from two different $(3,\epsilon)$-PR clusters. Given a $(3,\epsilon)$-SLPR cluster $C_i$, it follows that there is no
point $v\notin C_i$ such that $d(v,C_i)\leq r^*$.
Therefore, given a 2-approximate solution $\mathcal{C}$ satisfying condition \emph{(1)}, 
any $u\in C_i$ and $v\notin C_i$ cannot be in the same cluster. 
This is because in the graph of datapoints where edges signify a distance $\leq r^*$, $C_i$ is an isolated component.
Finally, by condition \emph{(2)}, $C_i$ must not be split into two clusters.
Therefore, $C_i\in\mathcal{C}$.
\end{proof}

\paragraph{Proof idea for Theorem~\ref{thm:3epsthm}}
The high level idea of this proof is similar to the proof of Theorem \ref{thm:3epsThm}.
In fact, the first half is very similar to Lemma \ref{lem:closepoints}:
we show that if two points from
different PR clusters are close together, then there must exist a set of $k+2$ points $C$
which $(3,3)$-hits the entire point set.
In the previous section, we arrived at a contradiction by showing that it is not possible
that all ${k+2 \choose k}$ subsets of $C$ can be centers that are $\epsilon$-close to $\OPT$.
However, the weaker local PR assumption poses a new challenge.

As in the previous section, we will still argue that all size $k$ subsets of $C$ cannot stay consistent with the $(3,\epsilon)$-PR clusters
using a ranking argument which maps optimal clusters to optimal centers, 
but our argument will be to establish conditional claims which narrow down the possible sets of ranking lists.
For instance, assume there is a $(3,\epsilon)$-PR cluster $C_i$ which ranks $c_i$ first, and ranks $c_j$ second.
Then under subsets $C'$ which do not contain $c_i$, $c_j$ is the center for a cluster $C_i'$ which is $\epsilon$-close to $C_i$.
Therefore, a different point in $C'$ must be the center for all but $\epsilon n$ points in $C_j$ (and it cannot be a different center $c_\ell$
without causing a contradiction).
This is the basis for Lemma~\ref{lem:rank-struct}, which is the main workhorse lemma in the proof of Theorem~\ref{thm:3epsthm}.
By building up conditional statements, we are able to analyze every possibility of the ranking lists for the three
$(3,\epsilon)$-PR clusters and show that all of them lead to contradictions, proving Theorem~\ref{thm:3epsthm}.

\paragraph{Formal analysis of Theorem \ref{thm:3epsthm}}

We start with a local perturbation resilience variant of Fact \ref{fact:half}.

\begin{fact} \label{fact:half-local}
Given a $k$-center clustering instance $(S,d)$
such that all optimal clusters have size $>2\epsilon n$,
let $d'$ denote an $\alpha$-perturbation with optimal
centers $C'=\{c'_1,\dots,c'_k\}$.
Let $\mathcal{C'}$ denote the set of 
$(\alpha,\epsilon)$-PR clusters.
Then there exists a one-to-one function $f:\mathcal{C'}\rightarrow C'$
such that for all $C_i\in\mathcal{C'}$, 
$|\text{Vor}_{C,d'}(f(C_i))\cap C_i|\geq |C_i|-\epsilon n$.
That is, the optimal cluster in $d'$ whose center is $f(C_i)$ contains all but $\epsilon n$
of the points in $C_i$.
\end{fact}

In words, for any set of optimal centers under an $\alpha$-perturbation, 
each PR cluster can be paired to a unique center.
This follows simply because all optimal clusters are size $>2\epsilon n$,
yet under a perturbation, $<\epsilon n$ points can switch out of each PR cluster.
Because of this fact, for a perturbation $d'$ with set of optimal centers $C$ and an $(\alpha,\epsilon)$-PR cluster $C_x$,
we will say that $c$ is the center for $C_x$ under $d'$ if $c$ is the center for all but $\epsilon n$ points in $C_x$.
Now we are ready to prove the first half of Theorem \ref{thm:3epsthm},
stated in the following lemma.
The proof is similar to Lemma \ref{lem:closepoints}.

\begin{lemma} \label{lem:closepoints-local}
Given a $k$-center clustering instance $(S,d)$
such that all optimal clusters are size $>2\epsilon n$
and there exist two points at distance $r^*$
from different $(3,\epsilon)$-PR clusters,
then there exists a partition $S_x\cup S_y$ of the non-centers $S\setminus\{c_\ell\}_{\ell=1}^k$ 
such that for all pairs $p\in S_x$, $q\in S_y$, $\{c_\ell\}_{\ell=1}^k\cup\{p,q\}$ $(3,3)$-hits $S$.
\end{lemma}

\begin{proof}
This proof is split into two main cases. The first case is the following: there exists a CCC2 for a
$(3,\epsilon)$-PR cluster, disregarding a $(3,\epsilon)$-PR cluster.
In fact, in this case, we do not need the assumption that two points from different PR clusters are close.
If there exists a CCC to a $(3,\epsilon)$-PR cluster,
denote the CCC by $c_x$ and the cluster by $C_y$. Otherwise, let $c_x$ denote a CCC2
to a $(3,\epsilon)$-PR cluster $C_y$, disregarding a $(3,\epsilon)$-PR center $c_z$.
Then $c_x$ is at distance $\leq r^*$ to all but
$\epsilon n$ points in $C_y$. Therefore, $d(c_x,c_y)\leq 2r^*$ and so $c_x$ is at
distance $\leq 3r^*$ to all points in $C_y$. Consider the following perturbation $d''$.

\begin{equation*}
d''(s,t)=
\begin{cases}
\min( 3 r^*, 3 d(s,t) ) & \text{if } s=c_x \text{, } t\in C_y  \\
3 d(s,t) & \text{otherwise.}
\end{cases}
\end{equation*}

This is a $3$-perturbation because for all $v\in C_y$, $d(c_x,v)\leq 3 r^*$.
Define $d'$ as the metric completion of $d''$.
Then by Lemma \ref{lem:d'_metric}, $d'$ is a 3-metric perturbation with
optimal cost $3 r^*$.
Given any non-center $v\in S$,
the set of centers $\{c_\ell\}_{\ell=1}^k\setminus\{c_y\}\cup\{v\}$ achieves the optimal
score, since $c_x$ is at distance $3 r^*$ from $C_y$, and all other 
clusters have the same center as in $\mathcal{OPT}$ (achieving radius $3 r^*$). 
Therefore, from Fact~\ref{fact:half-local}, one of the centers in 
$\{c_\ell\}_{\ell=1}^k\setminus\{c_y\}\cup\{v\}$ must be the center for
all but $\epsilon n$ points in $C_y$ under $d'$.
If this center is $c_\ell$, $\ell\neq x,y$, 
then for all but $\epsilon n$ points $u\in C_y$, 
$d(c_\ell,u)\leq r^*$ and $d(c_\ell,u)< d(c_z,u)$ for all $z\neq \ell,y$.
Then by definition, $c_\ell$ is a CCC for the $(3,\epsilon)$-PR cluster, $C_y$.
But then by construction, $\ell$ must equal $x$, so we have a contradiction.
Note that if some $c_\ell$ has for all but $\epsilon n$ points $u\in C_y$,
$d(c_\ell,u)\leq d(c_z,u)$ (non-strict inequality)
for all $z\neq \ell,y$, then there is another equally
good partition in which $c_\ell$ is not the center for all but $\epsilon n$ points
in $C_y$, so we still obtain a contradiction.
Therefore, either $v$ or $c_x$ must be the center for all but $\epsilon n$ points
in $C_y$ under $d'$.

If $c_x$ is the center for all but $\epsilon n$ points in $C_y$, 
then because $C_y$ is $(3,\epsilon)$-PR, the corresponding cluster must contain
fewer than $\epsilon n$ points from $C_x$. 
Furthermore, since for all $\ell\neq x$ and $u\in C_x$, $d(u,c_x)<d(u,c_\ell)$,
it follows that $v$ must be the center
for all but $\epsilon n$ points in $C_x$.
Therefore, every non-center $v\in S$ is at distance $\leq r^*$ to all but $\epsilon n$ points in
either $C_x$ or $C_y$.

Now partition all the non-centers into two sets $S_x$ and $S_y$, such that
$S_x=\{p\mid \text{for the majority of points }q\in C_x,\text{ }d(p,q)\leq r^*\}$ and
$S_y=\{p\mid p\notin S_x\text{ and for the majority of points }q\in C_y,\text{ }d(p,q)\leq r^*\}$.

Then given $p,q\in S_x$,
there exists an $s\in C_x$ such that
$d(p,q)\leq d(p,s)+d(s,q)\leq 2r^*$ 
(since both points are close to more than half of points in $C_x$).
Similarly, any two points $p,q\in S_y$ are $\leq 2r^*$ apart.

Now we will find a set of $k+2$ points that $(3,3)$-hits $S$.
For now, assume that $S_x$ and $S_y$ are both nonempty.
Given a pair $p\in S_x$, $q\in S_y$, we claim that $\{c_\ell\}_{\ell=1}^k\cup\{p,q\}$ $(3,3)$-hits $S$.
Given a non-center $s\in C_i$ such that $i\neq x,y$, without loss of generality let $s\in S_x$.
Then $c_i$, $p$, and $c_x$ are all distance $3r^*$ to $s$.
Furthermore, $c_i$, $c_x$ and $p$ are all distance $3r^*$ to $c_i$.
Given a point $s\in C_x$, then $c_x$, $c_y$, and $p$ are distance $3r^*$ to $s$ because $d(c_x,c_y)\leq 2r^*$.
Finally, $c_x$, $c_y$ and $p$ are distance $3r^*$ to $c_x$, and similar arguments hold for $s\in C_y$ and $c_y$.
Therefore, $\{c_\ell\}_{\ell=1}^k\cup\{p,q\}$ $(3,3)$-hits $S$.

If $S_x=\emptyset$ or $S_y=\emptyset$, then we can prove a slightly stronger statement:
for each pair of non-centers $\{p,q\}$, $\{c_\ell\}_{\ell=1}^k\cup\{p,q\}$ $(3,3)$-hits $S$.
Without loss of generality, let $S_y=\emptyset$.
Given a point $s\in C_i$ such that $i\neq x$ and $i\neq y$, then $c_i$, $c_x$, and $p$ are all distance $3r^*$ to $s$.
Given a point $s\in C_x$, then $p$, $c_x$, and $c_y$ are all distance $\leq 3r^*$ to $s$.
Given a point $s\in C_y$, then $p$, $c_x$, and $c_y$ are all distance $\leq 3r^*$ to $s$ because $s,p\in S_x$ implies $d(s,p)\leq 2r^*$.
Thus, we have proven case 1.

Now we turn to the other case.
Assume there does not exist a CCC2 to a PR cluster, disregarding a PR center. 
In this case, we need to use the assumption that there exist $(3,\epsilon)$-PR clusters $C_x$ and $C_y$,
and $p\in C_x$, $q\in C_y$ such that $d(p,q)\leq r^*$.
Then by the triangle inequality, $p$ is distance $\leq 3r^*$
to all points in $C_x$ and $C_y$.
Consider the following $d''$.

\begin{equation*}
d''(s,t)=
\begin{cases}
\min( 3 r^*, 3 d(s,t) ) & \text{if } s=p \text{, } t\in C_x\cup C_y  \\
3 d(s,t) & \text{otherwise.}
\end{cases}
\end{equation*}

This is a $3$-perturbation because $d(p,v)\leq 3 r^*$ for all $v\in C_x\cup C_y$.
Define $d'$ as the metric completion of $d''$.
Then by Lemma \ref{lem:d'_metric}, $d'$ is a 3-metric perturbation with
optimal cost $3 r^*$.
Given any non-center $s\in S$,
the set of centers $\{c_\ell\}_{\ell=1}^k\setminus\{c_x,c_y\}\cup\{p,s\}$ achieves the optimal
cost, since $p$ is distance $3 r^*$ from $C_x\cup C_y$, and all other 
clusters have the same center as in $\mathcal{OPT}$ (achieving radius $3 r^*$). 

From Fact~\ref{fact:half-local}, one of the centers in 
$\{c_\ell\}_{\ell=1}^k\setminus\{c_x,c_y\}\cup\{p,s\}$ must be the center for
all but $\epsilon n$ points in $C_x$ under $d'$.
If this center is $c_\ell$ for $\ell\neq x,y$, 
then for all but $\epsilon n$ points $t\in C_x$, 
$d(c_\ell,t)\leq r^*$ and $d(c_\ell,t)< d(c_z,t)$ for all $z\neq \ell,x,y$.
So by definition, $c_\ell$ is a CCC2 for $C_x$ disregarding $c_y$, which contradicts our assumption.
Similar logic applies to the center for all but $\epsilon n$ points in $C_y$.
Therefore, $p$ and $s$ must be the centers for $C_x$ and $C_y$.
Since $s$ was an arbitrary non-center, all non-centers are 
distance $\leq r^*$ to all but $\epsilon n$ points in either $C_x$ or $C_y$.

Similar to Case 1, we now partition all the non-centers into two sets $S_x$ and $S_y$, such that
$S_x=\{u\mid \text{for the majority of points }v\in C_x,\text{ }d(u,v)\leq r^*\}$ and
$S_y=\{u\mid u\notin S_x\text{ and for the majority of points }v\in C_y,\text{ }d(u,v)\leq r^*\}$.
As before, each pair of points in $S_x$ are distance $\leq 2r^*$ apart, and similarly for $S_y$.
It is no longer true that $d(c_x,c_y)\leq 2r^*$, however, we can prove that for both $S_x$ and $S_y$,
there exist points from two distinct clusters each. From the previous paragraph, given a non-center
$s\in C_i$ for $i\neq x,y$, we know that $p$ and $s$ are centers for $C_x$ and $C_y$.
With an identical argument, given $t\in C_j$ for $j\neq x,y,i$, we can show that $q$ and $t$ are centers for $C_x$ and $C_y$.
It follows that $S_x$ and $S_y$ both contain points from at least two distinct clusters.

Now we finish the proof by showing that for each pair $u\in S_x$, $v\in S_y$, $\{c_\ell\}_{\ell=1}^k\cup\{u,v\}$ $(3,3)$-hits $S$.
Given a non-center $s\in C_i$, without loss of generality $s\in S_x$, then there exists $j\neq i$ and $t\in C_j\cap S_x$.
Then $c_i$, $c_j$, and $u$ are $3r^*$ to $s$ and $c_i$, $c_x$, and $u$ are $3r^*$ to $c_i$.
In the case where $i=x$, then $c_i$, $c_j$, and $u$ are $3r^*$ to $c_i$.
This concludes the proof.
\end{proof}

Now we move to the second half of the proof of Theorem \ref{thm:3epsthm}.
Recall that the proof from the previous section relied on a ranking argument,
in which optimal clusters were mapped to their closest centers from the set $C$ of $k+2$ points from the first half of the proof.
This is the basis for the following fact.

\begin{fact} \label{fact:ranking-local}
Given a $k$-center clustering instance $(S,d)$
with optimal clustering $\mathcal{C}=\{C_1,\dots,C_k\}$
such that for all $i\in [k]$, $|C_i|>2\epsilon n$,
let $d'$ denote an $\alpha$-perturbation of $d$
and let $\mathcal{C'}$ denote the set of 
$(\alpha,\epsilon)$-PR clusters.
For each $C_x\in\mathcal{C'}$, there exists a ranking $R_{x,d'}$ of $S$ such that for any set of optimal centers $C=\{c'_1,\dots,c'_k\}$ under $d'$,
the center that is closest in $d'$ to all but $\epsilon n$ points in $C_x$
is the highest-ranked point in $R_{x,d'}$.
\footnote{Formally, for each $C_x\in\mathcal{C'}$, there exists a bijection $R_{x,d'}:S\rightarrow[n]$
such that for all sets of $k$ centers $C$ that achieve the optimal cost under $d'$, 
then $c=\text{argmin}_{c'\in C} R_{x,d'}(c')$ if and only if $\text{Vor}_C(c)$ is $\epsilon$-close to $C_x$.
}
\end{fact}

\begin{proof}
Assume the lemma is false. Then there exists an $(\alpha,\epsilon)$-PR cluster $C_i$, 
two distinct points $u,v\in S$, and two
sets of $k$ centers $C$ and $C'$ both containing $u$ and $v$, and both sets achieve the optimal score under an $\alpha$-perturbation
$d'$, but $u$ is the center for $C_i$ in $C$ while $v$ is the center for $C_i$ in $C'$.
Then $\text{Vor}_C(u)$ is $\epsilon$-close to $C_i$; similarly, $\text{Vor}_{C'}(v)$ is $\epsilon$-close to $C_i$.
This implies $u$ is closer to all but $\epsilon n$ points in $C_i$ than $v$, and $v$
 is closer to all but $\epsilon n$ points in $C_i$ than $u$.
Since $|C_i|>2\epsilon n$, this causes a contradiction.
\end{proof}

We also define $R_{x,d',C}:C\rightarrow [n']$ as the ranking specific to $C$.
Recall that our goal is to show a contradiction assuming two points from different PR clusters are close.
From Lemma~\ref{lem:hit} and Lemma~\ref{lem:closepoints-local},
we know there is a set of $k+2$ points, and any size $k$ subset is optimal under a suitable perturbation.
By Lemma~\ref{fact:half-local}, each size $k$ subset must have a mapping from PR clusters to centers, and from Fact~\ref{fact:ranking-local}, these
mappings are derived from a ranking of all possible center points by the PR clusters.
In other words, each PR cluster $C_x$ can rank all the points in $S$, so that for any set of optimal
centers for an $\alpha$-perturbation, the top-ranked center is the one whose cluster is $\epsilon$-close to $C_x$.
Now, using Fact~\ref{fact:ranking-local}, we can try to give a contradiction by showing that there is no
set of rankings for the PR clusters that is consistent with all the optimal sets of centers guaranteed by Lemmas~\ref{lem:hit} and~\ref{lem:closepoints-local}.
The following lemma gives relationships among the possible rankings.
These will be our main tools for contradicting PR and thus finishing the proof of Theorem~\ref{thm:3epsthm}.

\begin{lemma} \label{lem:rank-struct}
Given a $k$-center clustering instance $(S,d)$
such that all optimal clusters are size $>2\epsilon n$, 
and given non-centers $p,q\in S$ such that
$C=\{c_\ell\}_{\ell=1}^k\cup\{p,q\}$ $(3,3)$-hits $S$,
let the set $\mathcal{C'}$ denote the set of $(3,\epsilon)$-PR clusters.
Consider a 3-perturbation $d'$ such that all size $k$ subsets of $C$ are optimal sets of centers under $d'$.
The following are true.
\begin{enumerate}
\item Given $C_x\in \mathcal{C'}$ and $C_i$ such that $i\neq x$, $R_{x,d'}(c_x)<R_{x,d'}(c_i)$.
\item There do not exist $s\in C$ and $C_x,C_y\in\mathcal{C'}$ such that $x\neq y$, and $R_{x,d',C}(s)+R_{y,d',C}(s)\leq 4$.
\item Given $C_i$ and $C_x\in\mathcal{C'}$ such that $x\neq i$,
if $R_{x,d',C}(c_i)\leq 3$, then for all $C_y\in\mathcal{C'}$ such that $y\neq x,i$,
$R_{y,d',C}(p)\geq 3$ and $R_{y,d',C}(q)\geq 3$.
\end{enumerate}
\end{lemma}

\begin{proof}
\begin{enumerate}
\item By definition of the optimal clusters, for each $s\in C_x$, $d(c_x,s)<d(c_i,s)$, and therefore by construction, $d'(c_x,s)<d'(c_i,s)$.
It follows that $R_{x,d'}(c_x)<R_{x,d'}(c_i)$.
\item Assume there exists $s\in C$ and $C_x,C_y\in \mathcal{C'}$ such that $R_{x,d',C}(s)+R_{y,d',C}(s)\leq 4$.

Case 1: $R_{x,d',C}(s)=1$ and $R_{y,d',C}(s)\leq 3$.
Define $u$ and $v$ such that $R_{y,d',C}(u)=1$ and $R_{y,d',C}(v)=2$. 
(If $u$ or $v$ is equal to $s$, then redefine it to an arbitrary center in $C\setminus\{s,u,v\}$.)
Consider the set of centers $C'=C\setminus\{u,v\}$ which is optimal under $d'$ by Lemma \ref{lem:hit}. 
By Fact~\ref{fact:ranking-local}, $s$ is the center for all but $\epsilon n$ points in both $C_x$ and $C_y$, causing a contradiction.

Case 2: $R_{x,d',C}(s)= 2$ and $R_{y,d',C}(s)= 2$.
Define $u$ and $v$ such that $R_{x,d',C}(u)=1$ and $R_{y,d',C}(v)=1$.
(Again, if $u$ or $v$ is equal to $s$, then redefine it to an arbitrary center in $C\setminus\{s,u,v\}$.) 
Consider the set of centers $C'=C\setminus\{u,v\}$ which is optimal under $d'$ by Lemma \ref{lem:hit}. 
However, by Fact~\ref{fact:ranking-local}, $s$ is the center for all but $\epsilon n$ points in both $C_x$ and $C_y$, causing a contradiction.
\item Assume $R_{x,d',C}(c_i)\leq 3$.
 
Case 1: $R_{x,d',C}(c_i)=2$. Then by Lemma~\ref{lem:rank-struct} part 1, 
$R_{x,d',C}(c_x)=1$.
Consider the set of centers $C'=C\setminus\{c_x,p\}$, which is optimal under $d'$.
By Fact \ref{fact:ranking-local}, $\text{Vor}_{C'}(c_i)$ must be $\epsilon$-close to $C_x$.
In particular, $\text{Vor}_{C'}(c_i)$ cannot contain more than $\epsilon n$ points from $C_i$. 
But by definition, for all $j\neq i$ and $s\in C_i$, $d(c_i,s)<d(c_j,s)$. It follows that
$\text{Vor}_{C'}(q)$ must contain all but $\epsilon n$ points from $C_i$.
Therefore, for all but $\epsilon n$ points $s\in C_i$, for all $j$, $d'(q,s)<d'(c_j,s)$.
If $R_{y,d',C}(q)\leq 2$, then $C_y$ ranks $c_y$ or $p$ number one.
Then for the set of centers $C'=C\setminus\{c_y,p\}$,
$\text{Vor}_{C'}(q)$ contains more than $\epsilon n$
points from $C_y$ and $C_i$, contradicting the fact that $C_y$ is $(3,\epsilon)$-PR.
Therefore, $R_{y,d',C}(q)\geq 3$. The argument to show $R_{y,d',C}(p)\geq 3$ is symmetric.

Case 2: $R_{x,d',C}(c_i)=3$. If there exists $j\neq i,x$ such that $R_{x,d',C}(c_i)=2$, then without loss of generality we
are back in case 1. By Lemma~\ref{lem:rank-struct} part 1, $R_{x,d',C}(c_x)\leq 2$. 
Then either $p$ or $q$ are ranked top two, without loss of generality $R_{x,d',C}(p)\leq 2$.
Consider the set $C'=C\setminus\{c_x,p\}$.
Then as in the previous case, $\text{Vor}_{C'}(c_i)$ must be $\epsilon$-close to 
$C_x$, implying for all but $\epsilon n$ points $s\in C_i$, for all $j$, $d'(q,s)<d'(c_j,s)$.
If $R_{y,d',C}(q)\leq 2$, again, $C_y$ ranks $c_y$ or $p$ as number one.
Let $C'=C\setminus\{c_y,p\}$, and then $\text{Vor}_{C'}(q)$ contains more than $\epsilon n$
points from $C_y$ and $C_i$, causing a contradiction.
Furthermore, if $R_{y,d',C}(p)\leq 2$, then we arrive at a contradiction by
Lemma~\ref{lem:rank-struct} part 2.
\end{enumerate}
\end{proof}

We are almost ready to bring everything together to give a contradiction. Recall that Lemma~\ref{lem:closepoints-local}
allows us to choose a pair $(p,q)$ such that $\{c_\ell\}_{\ell=1}^k\cup\{p,q\}$ $(3,3)$-hits $S$.
For an arbitrary choice of $p$ and $q$, we may not end up with a contradiction.
It turns out, we will need to make sure one of the points comes from a PR cluster, and is very high in the ranking
list of its own cluster. This motivates the following fact, which is the final piece to the puzzle.

\begin{fact} \label{fact:prox}
Given a $k$-center clustering instance $(S,d)$
such that all optimal clusters are size $>2\epsilon n$, 
given an $(\alpha,\epsilon)$-PR cluster $C_x$, and given $i\neq x$,
then there are fewer than $\epsilon n$ points $s\in C_x$ such that $d(c_i,s)\leq\min(r^*,\alpha d(c_x,s))$.
\end{fact}

\begin{proof}
Assume the fact is false. Then let $B\subseteq C_x$ denote a set of size $\epsilon n$ such that for all $s\in B$, 
$d(c_i,s)\leq\min(r^*,\alpha d(c_x,s))$.
Construct the following perturbation $d'$. For all $s\in B$, set $d'(c_x,s)=\alpha d(c_x,s)$. 
For all other pairs $s,t$, set $d'(s,t)=d(s,t)$. This is clearly an $\alpha$-perturbation by construction.
Then the original set of optimal centers still achieves cost $r^*$ under $d'$ because for all $s\in B$, $d'(c_i,s)\leq r^*$.
Clearly, the optimal cost under $d'$ cannot be $<r^*$. It follows that the original set of optimal centers $C$ is still optimal under $d'$.
However, all points in $B$ are no longer in $\text{Vor}_C(c_x)$ under $d'$, contradicting the fact that $C_x$ is $(\alpha,\epsilon)$-PR.
\end{proof}

Now we are ready to prove Theorem~\ref{thm:3epsthm}.

\begin{proof} [Proof of Theorem~\ref{thm:3epsthm}]
Assume towards contradiction that there are two points at distance $\leq r^*$ from different
$(3,\epsilon)$-PR clusters.
Then by Lemma~\ref{lem:closepoints-local},
there exists a partition $S_1,S_2$ of non-centers of $S$ such that for all 
pairs $p\in S_1$, $q\in S_2$, $\{c_\ell\}_{\ell=1}^k\cup\{p,q\}$ $(3,3)$-hit $S$.
Given three $(3,\epsilon)$-PR clusters $C_x$, $C_y$, and $C_z$,
let $c_x'$, $c_y'$, and $c_z'$ denote the centers in $\{c_1,\dots,c_k\}$ 
ranked highest by $C_x$, $C_y$, and $C_z$ disregarding
$c_x$, $c_y$, and $c_z$, respectively.
Define $p=\text{argmin}_{s\in C_x} d(c_x,s)$, and without loss of generality let $p\in S_1$. Then pick an arbitrary point $q$ from $S_2$, and define $C=\{c_\ell\}_{\ell=1}^k\cup\{p,q\}$.
Define $d'$ as in Lemma \ref{lem:hit}
(i.e., we define $d'$ so that all size $k$ subsets of $C$ are optimal sets of centers under $d'$).
We claim that $R_{x,d',C}(p)<R_{x,d',C}(c_x')$:
from Fact~\ref{fact:prox}, there are fewer than $\epsilon n$ points $s\in C_x$ such that $d(c_x',s)\leq\min(r^*,3 d(c_x,s))$.
Among each remaining point $s\in C_x$, we will show $d'(p,s)\leq d'(c_x',s)$. Recall that $d(p,s)\leq d(p,c_x)+d(c_x,s)\leq 2r^*$, so $d'(p,s)=\min(3r^*,3d(p,s))$. 
There are two cases to consider.

Case 1: $d(c_x',s)>r^*$. Then by construction, $d'(c_x',s)\geq 3r^*$, and so $d'(p,s)\leq d'(c_x',s)$.

Case 2: $3d(c_x,s)<d(c_x',s)$. 
Then 
\begin{align*}
d'(p,s)&\leq 3d(p,s)\text{ by construction of $d'$}\\
&\leq 3(d(p,c_x)+d(c_x,s))\text{ by triangle inequality}\\
&\leq 6d(c_x,s)\text{ by definition of $p$}\\
&\leq 2d(c_x',s)\text{ by assumption}\\
&\leq \min(3r^*,3d(c_x',s))\text{ by construction of $d'$}\\
&=d'(c_x',s),
\end{align*}
and this proves our claim.

Because $R_{x,d',C}(p)<R_{x,d',C}(c_x')$, and $R_{x,d',C}(c_x)<R_{x,d',C}(c_x')$,
it follows that the top two can only be $c_x$, $p$, or $q$.
Therefore, either $R_{x,d',C}(p)\leq 2$ or $R_{x,d',C}(q)\leq 2$.
The rest of the argument is broken up into cases.

Case 1: $R_{x,d',C}(c_x')\leq 3$. 
From Lemma~\ref{lem:rank-struct} part 3, then $R_{y,d',C}(p)\geq 3$ and $R_{y,d',C}(q)\geq 3$.
It follows by process of elimination that 
$R_{y,d',C}(c_{y})=1$ and $R_{y,d',C}(c_{y'})=2$.
Again by Lemma~\ref{lem:rank-struct} part 3, $R_{x,d',C}(p)\geq 3$ and $R_{x,d',C}(q)\geq 3$, causing a contradiction. 

Case 2: $R_{x,d',C}(c_{x'})>3$ and $R_{y,d',C}(c_{y'})\leq 3$.
Then $R_{x,d',C}(p)\leq 3$ and $R_{x,d',C}(q)\leq 3$. 
From Lemma~\ref{lem:rank-struct} part 3, $R_{x,d',C}(p)\geq 3$ and $R_{x,d',C}(q)\geq 3$, therefore we have a contradiction.
Note, the case where $R_{x,d',C}(c_{x'})>3$ and $R_{z,d',C}(c_{z'})\leq 3$ is identical to this case.

Case 3: The final case is when $R_{x,d',C}(c_{x'})>3$, $R_{y,d',C}(c_{y'})>3$, and $R_{z,d',C}(c_{z'})>3$.
So for each $i\in\{x,y,z\}$, the top three for $C_i$ in $C$ is a permutation of $\{c_i,p,q\}$.
Then each $i\in\{x,y,z\}$ must rank $p$ or $q$ in the top two, so by the Pigeonhole Principle, 
either $p$ or $q$ is ranked top two by
two different PR clusters, contradicting Lemma~\ref{lem:rank-struct}.
This completes the proof.
\end{proof}

We note that Case 3 in Theorem~\ref{thm:3epsthm} is the reason why we need to assume there are at least three
$(3,\epsilon)$-PR clusters. If there are only two, $C_x$ and $C_y$, it is possible that there exist
$u\in C_x$, $v\in C_y$ such that $d(u,v)\leq r^*$. 
In this case, for $p,q,d',$ and $C$ as defined in the proof of
Theorem~\ref{thm:3epsthm}, if $C_x$ ranks $c_x$, $p$, $q$ as its top three and $C_y$ ranks $c_y$, $q$, $p$
as its top three, then there is no contradiction.

\subsection{Asymmetric $k$-center}
Now we consider asymmetric $k$-center under $(3,\epsilon)$-PR.
The asymmetric case is a more challenging setting, and our algorithm does not return the optimal
solution, however, our algorithm outputs a clustering that is $\epsilon$-close to the optimal solution.

Recall the definition of the symmetric set $A$ from Section \ref{sec:2pr},
$A=\{p\mid \forall q, d(q,p)\leq r^* \implies d(p,q)\leq r^*\}$, 
equivalently, the set of all CCV's.
We might first ask whether $A$ respects the structure of $\OPT$,
as it did under 2-perturbation resilience. 
Namely, whether \emph{Condition 1:} all optimal centers are in $A$, and \emph{Condition 2:}
$\arg\min_{q\in A}d(q,p)\in C_i\implies p\in C_i$ hold.
In fact, we will show that neither conditions hold in the asymmetric case,
but both conditions are only slightly violated.

\subsubsection{Structure of optimal centers}
First we give upper and lower bounds on the number of optimal centers in $A$,
which will help us construct an algorithm for $(3,\epsilon)$-PR later on.
We call a center $c_i$ ``bad'' if it is not in the set $A$, i.e., 
$\exists q$ such that $d(q,c_i)\leq r^*$ but $d(c_i,q)>r^*$.
First we give an example of a $(3,\epsilon)$-PR instance with at least one bad center,
and then we show that all $(3,\epsilon)$-PR instances must have at most 6 bad centers.

\begin{lemma}
For all $\alpha,n,k\geq 1$ such that $\frac{n}{k}\in \mathbb{N}$, there exists a clustering
instance with one bad center satisfying $\left(\alpha,\frac{2}{n}\right)$-perturbation resilience.
\end{lemma}

\begin{proof}
Given $\alpha,n,k\geq 1$, we construct a clustering instance such that all clusters are size 
$\frac{n}{k}$. Denote the clusters by $C_1,\dots,C_k$ and the centers by $c_1,\dots,c_k$.
For each $i$, denote the non-centers in $C_i$ by $p_{i,1},\dots,p_{i,L}$.
Now we define the distances as follows. For convenience, set $L=\frac{n}{k}-1$.
For all $2\leq i\leq k$ and $1\leq j\leq L$, let $d(c_i,p_{i,j})=1$.
For all $2\leq i\leq k$, $1\leq j,\ell\leq L$, let $d(p_{i,j},p_{1,\ell})=\frac{1}{\alpha}$
and $d(c_1,p_{1,\ell})=\frac{1}{\alpha}$.
Finally, let $d(p_{2,1},c_1)=1$.
All other distances are the maximum allowed by the triangle inequality.
In particular, the distance between two points $p$ and $q$ is set to infinity unless there
exists a path from $p$ to $q$ with finite distance edges defined above.
See Figure \ref{fig:badcenter}.

The optimal clusters and centers are $C_1,\dots,C_k$ and $c_1,\dots,c_k$, achieving a radius
of 1, and $c_1$ is a bad center because $d(p_{2,1},c_1)=1$ but $d(c_1,p_{2,1})=\infty$.
It is left to show that this instance satisfies $(\alpha,\frac{2}{n})$-perturbation resilience.
Given an arbitrary $\alpha$-perturbation $d'$, we must show that at most $\frac{2}{n}\cdot n=2$ points switch clusters. 
By definition of an $\alpha$-perturbation, for all $p,q$, we have $d(p,q)\leq d'(p,q)\leq \alpha d(p,q)$
(recall that without loss of generality, a perturbation only increases the distances).
The centers $c_2,\dots,c_k$ \emph{must} remain optimal centers under $d'$,
since for all $2\leq i\leq k$, $d'(c_i,p_{i,1})\leq \alpha$ and no other point
$q\neq c_i,p_{i,1}$ satisfies $d(q,p_{i,1})<\infty$.
Now we must determine the final optimal center.
Note that for all $2\leq i,j\leq k$ and $1\leq \ell,m\leq L$, we have
\begin{align*}
d'(p_{i,\ell},p_{1,m})&\leq \alpha d(p_{i,\ell},p_{1,m})\\
&\leq\alpha\cdot\frac{1}{\alpha}\\
&< d(c_j,p_{1,m})\\
&\leq d'(c_j,p_{1,m}).
\end{align*}

Therefore, $c_j$ cannot be a center for $p_{i,\ell}$, for all $2\leq i,j\leq k$ and $1\leq\ell\leq L$.
Therefore, the final optimal center $c$ under $d'$ must be either $c_1$ or $p_{i,\ell}$ for $2\leq i\leq k$ and $1\leq \ell\leq L$.
Furthermore, it follows that $c$'s cluster at least contains $C_1\setminus \{c_1\}$
and for each $2\leq i\leq k$, $c_i$'s cluster at least contains $C_i\setminus\{c\}$.
Therefore, the optimal clustering under $d'$ differs from $\OPT$ by at most two points.
This concludes the proof.
\end{proof}

\begin{figure}
\centering
\includegraphics[width =.6\textwidth]{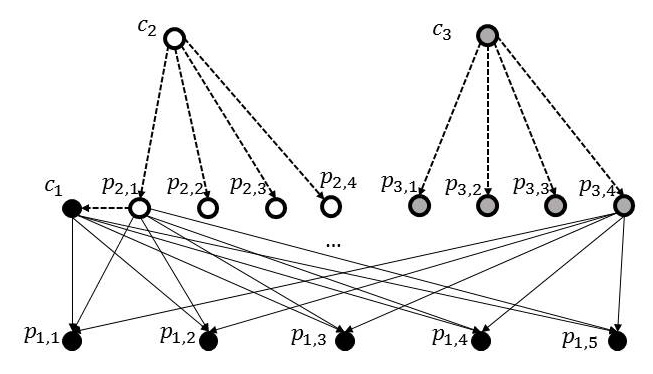}
\caption{\small \sl An $(\alpha,\epsilon)$-perturbation resilient asymmetric
	$k$-center instance with one bad center ($c_y$).
	The dotted arrows are distance 1, and the solid arrows are
	distance $\frac{1}{\alpha}$.}
\label{fig:badcenter}
\end{figure}

Now we show there are at most 6 bad centers for any asymmetric $k$-center instance satisfying
$(3,\epsilon)$-PR.

\begin{lemma} \label{lem:badcenters}
Given a $(3,\epsilon)$-perturbation resilient asymmetric $k$-center
instance such that all optimal clusters are size $>2\epsilon n$, there are at most 6 bad centers,
i.e., at most 6 centers $c_i$ such that $\exists q$ with $d(q,c_i)\leq r^*$ and $d(c_i,q)>r^*$.
\end{lemma}

\begin{proof} 
Assume the lemma is false. 
By assumption, there exists a set $B$, $|B|\geq 7$,
of centers $c_i$ such that $\exists q$ with $d(q,c_i)\leq r^*$ and $d(c_i,q)> r^*$.
The first step is to use this set of bad centers to construct a set $C$ of $\leq k-3$ points
which are $\leq 3r^*$ from every point in $S$.
Once we find $C$, we will show how this set cannot exist under $(3,\epsilon)$-perturbation resilience, causing a contradiction.

Given a center $c_i\in B$, and $q$ such that $d(q,c_i)\leq r^*$ and $d(c_i,q)> r^*$,
note that $d(c_i,q)>r^*$ implies $q\notin C_i$.
For each $c_i\in B$, define $a(i)$ as the center of $q$'s cluster.
Then $d(a(i),c_i)\leq d(a(i),q)+d(q,c_i)\leq 2r^*$ and so for all $p\in C_i$,
we have $d(a(i),p)\leq d(a(i),c_i)+d(c_i,p)\leq 3r^*$.
If for each $c_i\in B$, $a(i)$ is not in $B$, then we would be able to remove $B$ from the set of optimal centers,
and the remaining centers are still distance $3r^*$ from all points in $S$ (finishing the first half of the proof).
However, we need to consider the case where there exist centers $c_i$ in $B$ such that $a(i)$ is also in $B$.
Our goal is to show there exists a subset $B'\subseteq B$ of size 3, such that for each $c_i\in B'$, 
$a(i)\notin B'$, therefore, the set of optimal centers without $B'$ is still distance $3r^*$ from all points in $S$.

Construct a directed graph $G=(B,E)$ where $E=\{(c_i,c)\mid c=a(i)\}$.
Then every point has out-degree $\leq 1$.
Finding $B'$ corresponds to finding $\geq 3$ points with no edges to one another,
i.e., an independent set of $G$.
Consider a connected component $G'=(V',E')$ of $G$. 
Since $V'$ is connected, we have $|E'|\geq |V'|-1$. 
Since every vertex has out-degree $\leq 1$, $|E'|\leq |V'|$. Then we have two cases.

Case 1: $|E'|=|V'|-1$. Then $G'$ is a tree, 
and so there must exist an independent set of size $\left\lceil \frac{|V'|}{2}\right\rceil$.

Case 2: $|E'|=|V'|$. Then $G'$ contains a cycle, and so there exists an independent set of size 
$\left\lfloor \frac{|V'|}{2}\right\rfloor$.

It follows that we can always find an independent set of size $\left\lfloor \frac{|V'|}{2}\right\rfloor$
for the entire graph $G$. 
For $|B|\geq 7$, there exists such a set $B'$ of size $\geq 3$.
Then we have the property that $c_i\in B'\implies a(i)\notin B'$.

Now let $C=\{c_\ell\}_{\ell=1}^k\setminus B'$. 
By construction, $B'$ is distance $\leq 3r^*$ to all points in $S$.
Consider the following 3-perturbation $d'$: increase all distances by a factor of 3, except $d(a(i),p)$,
for $i$ such that $c_i\in B'$ and $p\in C_i$, which we increase to $\min(3r^*,3d(a(i),p))$.
Then by Lemma \ref{lem:d'_same_cost}, the optimal radius is $3r^*$. Therefore, the set $C$
achieves the optimal cost over $d'$ even though $|C|\leq k-3$.
Then we can pick any combination of 3 dummy centers, and they must all result in clusterings which are
$\epsilon$-close to $\OPT$. We will show this contradicts $(3,\epsilon)$-perturbation resilience.

We pick five arbitrary points $p_1,p_2,p_3,p_4,p_5\in S\setminus C$,
and define $C'=C\cup\{p_1,p_2,p_3,p_4,p_5\}$.
From the above paragraph, each size 3 subset $P\subseteq\{p_1,p_2,p_3,p_4,p_5\}$ added to $C$ will result
in a set of optimal centers under $d'$.
Then by Fact \ref{fact:half}, each point in $C\cup P$ must be the center for the majority of points in exactly one cluster.
To obtain a contradiction, we consider the ranking defined by Fact \ref{fact:ranking} of $C'$ over $d'$.

We we start with a claim about the rankings: for each $c'\in C'$, for all pairs $x,y$ such that $x\neq y$,
if $\nexists c\in C$ such that $R_{x,d',C'}(c)<R_{x,d',C'}(c')$ or $R_{y,d',C'}(c)<R_{y,d',C'}(c')$,
then $R_{x,d',C'}(c')+R_{y,d',C'}(c')\geq 5$.
In words, there cannot be two clusters such that $c'$ is ranked first among $C\cup\{c'\}$ and top two (or first and third) among $C'$ for both clusters.
Assume this is false. Then there exist $x\neq y$ such that $R_{x,d',C'}(c')+R_{y,d',C'}(c')\leq 4$,
so there are at most two total points ranked above $c'$ in $R_{x,d',C'}$ and $R_{y,d',C'}$, and these points must be from the set $\{p_1,p_2,p_3,p_4,p_5\}$.
Without loss of generality, denote these points by $p$ and $p'$ (if there are one or zero points ranked above $c'$, let one or both of $p$ and $p'$ be arbitrary).
Then consider the set of centers $C'\setminus \{p,p'\}$ which is size $k$ and must be optimal under $d'$ as described earlier.
However, the partitioning is not $\epsilon$-close to $\OPT$, since $c'$ is the best center (ranked 1) for both $C_x$ and $C_y$.
This completes the proof of the claim.

Now consider the set $D=\{c_i\in C\mid \exists x\text{ s.t. }R_{x,d',C'}(c_i)=1\}$, i.e., the set of points in $C$ which are ranked 1 for some cluster.
Denote $m=(k-3)-|D|$, which is the number of points in $C$ which are \emph{not} ranked 1 for any cluster.
By the claim and since $|C|=k-3$, there are exactly $m+3$ clusters whose top-ranked point is not in $C$.
Given one such cluster $C_x$, again by the claim, the top two ranked points must not be from the set $D$.
Therefore, there are $2(m+3)$ slots that must be filled by $m+5$ points, so (for all $m\geq 0$) by the Pigeonhole Principle, there must exist
a point $p\in C'$ ranked in the top two by two different clusters. This directly contradicts the claim, so we have a contradiction which completes the proof.
\end{proof}

\subsubsection{Algorithm under $(3,\epsilon)$-PR}

From the previous lemma, we know that at most a constant number of centers are bad.
Essentially, our algorithm runs a symmetric 2-approximation algorithm on $A$, for all $k-6\leq k'\leq k$,
to find a 2-approximation for the clusters in $A$.
For instance, iteratively pick an unmarked point, and mark all points distance $2r^*$ away from it
\citep{hochbaum1985best}.
Then we use brute force to find the remaining 6 centers,
which will give us a 3-approximation for the entire point set.
Under $(3,\epsilon)$-perturbation resilience, this 3-approximation must be 
$\epsilon$-close to $\OPT$.
We are not able to output $\OPT$ exactly, since
Condition 2 may not be satisfied for up to $\epsilon n$ points.
The asymmetric $k$-center algorithm runs an approximation algorithm for symmetric $k$-center as a subroutine.
The symmetric $k$-center instance $(S,A,d)$ is a generalization: the set of allowable centers $A$ is a subset of
the points $S$ to be clustered. The classic 2-approximation algorithms for $k$-center apply to this setting as well.

\begin{algorithm}[h]
\caption {\textsc{$(3,\epsilon)$-Perturbation Resilient Asymmetric $k$-center}}
\label{alg:3eps-asy}
\begin{algorithmic} 
\Require{Asymmetric $k$-center instance $(S, d)$, $r^*$ (or try all possible candidates).}
\begin{enumerate}[leftmargin=*]
\item Build set $A=\{p\mid \forall q, d(q,p)\leq r^* \implies d(p,q)\leq r^*\}$.
\item Create the threshold graph $G=(A,E)$ where $E=\{(u,v)\mid d(u,v)\leq r^*\}$.
Define a new symmetric $k$-center instance $(S,A,d')$ where $d'(u,v)=dist_G(u,v)$.
\item For all $k-6\leq k'\leq k$, run a symmetric $k$-center 2-approximation algorithm on $(S,A,d')$.
Break if the output is a set of centers $C$ achieving cost $\leq 2r^*$.
\item 
For all $C'\subseteq C$ of size $k-6$ and $S'\subseteq S$ of size 6, return if
cost$(C'\cup S')\leq 3r^*$.
\end{enumerate}
\Ensure{Voronoi tiling $G_1, \dots, G_k$
using $C'\cup S'$ as the centers.}
\end{algorithmic}
\end{algorithm}

\begin{theorem} \label{thm:3eps_asy}
Algorithm \ref{alg:3eps-asy} runs in polynomial time and outputs a clustering 
that is $\epsilon$-close
to $\OPT$, for $(3,\epsilon)$-perturbation resilient asymmetric $k$-center
instances such that all optimal clusters are size $>2\epsilon n$.
\end{theorem}

\begin{proof}
We define three types of clusters. A cluster $C_i$ is green if $c_i\in A$, it is yellow if $c_i\notin A$ but $C_i\cap A\neq\emptyset$,
and it is red if $C_i\cap A=\emptyset$. Denote the number of yellow clusters by $y$, and the number of red clusters by $x$.
From Lemma \ref{lem:badcenters}, we know that $x+y\leq 6$.
The symmetric $k$-center instance $(S,A,d')$ constructed in step 2 of the algorithm is a subset of an instance with $k-x$ optimal clusters
of cost $r^*$, so the $(k-x)$-center cost of $(S,A,d')$ is at most $r^*$. Therefore, step 3 will return a set of centers achieving cost $\leq 2r^*$
for some $k'\leq k-x$.
By definition of green clusters, we know that $k-x-y$ clusters have their optimal center in $A$.
For each green cluster $C_i$, let $c(i)\in C$ denote the center which is distance $\leq 2r^*$
to $c_i$ (if there is more than one point in $C$, denote $c(i)$ by one of them arbitrarily).
Let $C'=\{c(i)\mid C_i\text{ is green}\}$, and $|C'|\leq k-x-y$.
Then the set $C'\cup \{c_x\mid x\text{ is not green}\}$ is cost $\leq 3r^*$, and the
algorithm is guaranteed to encounter this set in the final step.

Finally, we explain why $C'\cup \{c_x\mid x\text{ is not green}\}$ must be $\epsilon$-close to $\OPT$.
Let $B=\{c_x\mid x\text{ is not green}\}$.
Create a 3-perturbation in which we increase all distances by 3, except
for the distances from $C'\cup B$ to all points in their Voronoi tile, which we
increase up to $3r^*$. Then, the optimal score is $3r^*$ by Lemma
\ref{lem:d'_metric}, and $C'\cup B$ achieves this score. Therefore, 
by $(3,\epsilon)$-perturbation resilience, the Voronoi
tiling of $C'\cup B$ must be $\epsilon$-close to $\OPT$.
This completes the proof.
\end{proof}

\subsection{APX-Hardness under perturbation resilience}
Now we show hardness of approximation even when it is guaranteed the clustering satisfies 
$(\alpha,\epsilon)$-perturbation resilience for $\alpha\geq 1$ and $\epsilon>0$. 
The hardness is based on a reduction from the general clustering
instances, so the APX-hardness constants match the non-stable APX-hardness results.
This shows the condition on the cluster sizes in Theorem \ref{thm:3epsThm} is tight.
\footnote{
In fact, this hardness holds even under the strictly stronger notion of 
\emph{approximation stability}~\citep{as}, therefore, it generalizes a hardness result by \cite{as}.
}

\begin{theorem} \label{thm:apx}
Given  $\alpha\geq 1$, $\epsilon>0$,
it is NP-hard to approximate $k$-center to 2, $k$-median to $1.73$, or $k$-means to 3.94,
even when it is guaranteed the instance satisfies $(\alpha,\epsilon)$-perturbation resilience.
\end{theorem}

\begin{proof}
Given $\alpha\geq 1$, $\epsilon>0$, assume there exists a $\beta$-approximation algorithm $\mathcal{A}$ for $k$-median under
$(\alpha,\epsilon)$-perturbation resilience. We will show a reduction to $k$-median without perturbation resilience.
Given a $k$-median clustering instance $(S,d)$ of size $n$, we will create a new instance $(S',d')$ for $k'=k+n/\epsilon$ with
size $n'=n/\epsilon$ as follows.
First, set $S'=S$ and $d'=d$, and then add $n/\epsilon$ new points to $S'$, such that their distance to every other point is
$2\alpha n\max_{u,v\in S}d(u,v)$. Let $\OPT$ denote the optimal solution of $(S,d)$. Then the optimal solution to $(S',d')$ is
to use $\OPT$ for the vertices in $S$, and make each of the $n/\epsilon$ added points a center. 
Note that the cost of $\OPT$ and the optimal clustering for $(S',d')$ are identical, since the added points are distance 0 to their center.
Given a clustering $\mathcal{C}$ on $(S,d)$, let $\mathcal{C}'$ denote the clustering of $(S',d')$ that clusters $S$ as in $\mathcal{C}$, and then adds
$n/\epsilon$ extra centers on each of the added points. Then the cost of $\mathcal{C}$ and $\mathcal{C}'$ are the same, so it follows that
$\mathcal{C}$ is a $\beta$-approximation to $(S,d)$ if and only if $\mathcal{C}'$ is a $\beta$-approximation to $(S',d')$.
Next, we claim that $(S',d')$ satisfies $(\alpha,\epsilon)$-perturbation resilience.
Given a clustering $\mathcal{C}'$ which is an $\alpha$-approximation to $(S',d')$, then there must be a center located at all
$n/\epsilon$ of the added points, otherwise the cost of $\mathcal{C}'$ would be $>\alpha\OPT$.
Therefore, $\mathcal{C}'$ agrees with the optimal solution on all points except for $S$, therefore, $\mathcal{C}'$ must be $\epsilon$-close to
the optimal solution.
Now that we have established a reduction, the theorem follows from hardness of $1.73$-approximation for $k$-median~\citep{jain2002new}.
The proofs for $k$-center and $k$-means are identical, using hardness from~\citep{gonzalez1985clustering} and~\citep{jain2002new}, respectively.
\end{proof}

\section{Conclusion} \label{sec:conclusion}

Our work pushes the understanding of (promise) stability conditions farther in several ways. 
We are the first to design computationally efficient algorithms to find the optimal clustering under
$\alpha$-perturbation resilience
with a constant value of $\alpha$ for a problem that is hard to approximate to any constant factor in the worst
case, thereby demonstrating the power of perturbation resilience. Furthermore, we demonstrate the limits of
this power by showing the first tight results in this space for perturbation resilience.
Our work also shows a surprising relation between symmetric and asymmetric instances, in that they
are equivalent under resilience to 2-perturbations, which is in stark contrast to their widely differing tight
approximation factors.
Finally, we initiate the study of clustering under local stability.
We define a local notion of perturbation resilience, and
we give algorithms that simultaneously output all optimal clusters satisfying local stability, while
ensuring the worst-case approximation guarantee.
Although $\alpha=2$ is tight for $k$-center, the best value of perturbation resilience for symmetric $k$-median and
other center-based objectives is not known. Currently, the best upper bound is $\alpha=2$ \citep{angelidakis2017algorithms},
but no lower bounds are known for $\alpha=1+\epsilon$, for constant $\epsilon>0$.

\section*{Acknowledgments} \label{sec:acks}
This work was supported in part by NSF grants
CCF 1535967, CCF-1422910, CCF-145117, IIS-1618714, a Sloan Research
Fellowship, a Microsoft  Research Faculty  Fellowship,  a Google
Research Award,  an IBM Ph.D. fellowship,  a National Defense Science
\& Engineering Graduate  (NDSEG) fellowship,
and by the generosity of Eric and Wendy Schmidt by recommendation of the Schmidt Futures program.

\newpage


\bibliographystyle{plainnat}
\bibliography{clustering}







\end{document}